%% file: main.tex
\documentclass[11pt]{article}
\input{header}

\usepackage[final]{acl}

\usepackage{times}
\usepackage{latexsym}
\usepackage{amsmath}
\usepackage[T1]{fontenc}
\usepackage[utf8]{inputenc}
\usepackage{microtype}
\usepackage{inconsolata}

\usepackage{graphicx}

\title{From Citations to Contributions:\\LLM-Assisted Credit Scoring of Research Articles}

\author{
  Sana Ebrahimi \\
  University of Illinois Chicago \\
  \texttt{sana@uic.edu} \\
  \And
  Suraj Shetiya \\
  IIT Bombay \\
  \texttt{surajs@cse.iitb.ac.in} \\\And
  Abolfazl Asudeh \\
  University of Illinois Chicago \\
  \texttt{asudeh@uic.edu}\\
  }

\begin{document}
\maketitle

\begin{abstract}
Citation-based measures of scientific influence typically treat citations as uniform signals, ignoring the different roles that cited works play in a paper's contribution. 
We introduce {\em contribution-based credit scoring} for research articles: a structured citation analysis that decomposes a paper's credit between its own original contribution and the prior work it builds on. 
Motivated by a cooperative-game view of scientific credit, we propose the \emph{contribution tree}, a hierarchical framework that conserves importance across the document structure and separates original from citation-derived contribution. 
To make this framework scalable, we use LLMs as noisy comparative estimators of local importance. We further extend the model to article collections by propagating contributions through weighted citation graphs, yielding corpus-level contributions and normalized influence scores.
Our experiments suggest that our framework captures contribution signals beyond surface-level heuristics.
Our code is available at our \href{https://github.com/sanaebrahimi/Importance_Scoring/}{Github Repository\footnote{\href{https://github.com/sanaebrahimi/Importance_Scoring/}{https://github.com/sanaebrahimi/Importance\_Scoring/}}}. 
\end{abstract}

\input{intro}

\input{tree}
\input{llmasoracle}

\input{tech_vs_cite}

\input{citationgraph}

\input{exp}
\input{conclusion}

\vspace{-1mm}
\section*{Acknowledgments}
\vspace{-1mm}

This material is based in part upon work supported by the National Science Foundation under Award No. \href{https://www.nsf.gov/awardsearch/show-award/?AWD_ID=2348919}{2348919}, ANRF under
ANRF/ECRG/2024/004976/ENS,
and CloudBank computation (\href{https://www.xras.org/public/requests/263909-ACCESS-CIS251215}{NSF ACCESS allocation CIS251215}).

\newpage
\input{limitation}
\bibliography{ref}

\appendix
\section*{Appendix}
\input{appendix/Toc}
\input{appendix/related}
\input{appendix/warmup}
\input{appendix/axioms}

\input{appendix/llm-estimator}
\input{appendix/proof}
\input{appendix/topological}

\input{appendix/exp}

\input{appendix/extended-exp}
\input{appendix/human-noise}
\input{appendix/prompts}

\input{appendix/annotators}
\end{document}

%% file: header.tex
\usepackage{color}	
\usepackage{xspace}
\usepackage[table]{xcolor}
\usepackage[utf8]{inputenc}
\usepackage[table]{xcolor}
\usepackage{wrapfig}
\usepackage{tabularx}
\usepackage{longtable}
\usepackage{array}
\usepackage{ragged2e}  

\usepackage{booktabs}  
\usepackage{textcomp}

\usepackage{balance}
\usepackage{amssymb}
\usepackage{tikz}

\usetikzlibrary{arrows.meta}

\usepackage{cite}

\usepackage{graphicx}
\usepackage{enumerate}
\usepackage[hidelinks]{hyperref}
\usepackage[font={small,sf}]{caption}
\usepackage{floatflt}
\usepackage{lipsum}
\usepackage{multirow}
\usepackage{soul}
\usepackage{mdframed}
\usepackage{amsmath,amsfonts}
\usepackage{comment}
\usepackage{subcaption}
\usepackage{enumitem}
\usepackage{csquotes}
\usepackage{tabularx}
\usepackage{tabularx}
\usepackage{tikz}
\usetikzlibrary{decorations.pathreplacing}
\usetikzlibrary{matrix, positioning}
\usetikzlibrary{arrows.meta}
\usepackage{mathtools}
\usepackage[a-1b]{pdfx}  
\usepackage[T1]{fontenc}
\usepackage{microtype}
\usepackage{bbm}
\usepackage{fnpct}  
\usepackage{tcolorbox}
\usepackage[labelfont=bf]{caption}
\usepackage{subcaption}
\usepackage{makecell}
\usepackage{fixltx2e}
\usepackage{rotating,multirow}
\usepackage{etoolbox,xspace}
\usepackage{algorithmicx}
\usepackage{algorithm}
\usepackage{fontawesome}
\usepackage{algpseudocode}
\usepackage{wrapfig}
\usepackage{scalerel,amssymb}
\usepackage{booktabs}
\usepackage{multirow}
\usepackage{siunitx}
\usepackage{amsmath}
\usepackage{amsthm}

\graphicspath{{./figs/}}

\usepackage{siunitx}
\usepackage{algorithm}

\usepackage{tcolorbox}
\tcbuselibrary{breakable, skins, listings}

\tcbset{
  promptbox/.style={
    breakable,
    colback=gray!6,
    colframe=gray!40,
    fonttitle=\bfseries\small,
    coltitle=white,
    attach boxed title to top left={yshift=-2mm, xshift=4mm},
    boxed title style={colback=gray!55, rounded corners},
    left=4pt, right=4pt, top=6pt, bottom=4pt,
    fontupper=\small\ttfamily,
    before upper={\setlength{\parindent}{0pt}},
  },
  systemprompt/.style={promptbox, colframe=blue!35, colback=blue!4,
    boxed title style={colback=blue!50}},
  userprompt/.style={promptbox,  colframe=teal!40, colback=teal!4,
    boxed title style={colback=teal!55}},
}
\tcbset{
  foundationalbox/.style={
    guidelinebox,
    colframe=teal!40,
    boxed title style={colback=teal!55},
  },
}
\tcbset{
  guidelinebox/.style={
    breakable,
    enhanced,
    colback=white,
    colframe=blue!35,
    boxrule=0.6pt,
    fonttitle=\bfseries\small,
    coltitle=white,
    attach boxed title to top left={yshift=-2mm, xshift=4mm},
    boxed title style={
      colback=blue!55,
      rounded corners,
    },
    left=10pt,
    right=10pt,
    top=10pt,
    bottom=8pt,
    before skip=12pt,
    after skip=12pt,
    fontupper=\small,
    before upper={\setlength{\parindent}{0pt}},
  },
}
\usepackage{listings}
\newcommand{\commentstyle}{\frenchspacing \bfseries \# here is a comment: }

\DeclarePairedDelimiterX\set[1]\lbrace\rbrace{\,#1\,}

\usepackage{tikz}
\usetikzlibrary{positioning}

\newtheorem{theorem}{Theorem} 
\newtheorem{lemma}[theorem]{Lemma}

\usepackage{xspace}

\newcommand{\warrow}{\rightsquigarrow}

\newcommand{\abol}[1]{\textcolor{red}{abol: #1}\xspace}

\newcommand{\suraj}[1]{\textcolor{red}{Suraj: #1}\xspace}

%% file: intro.tex
\section{Introduction}\label{sec:intro}

Citations are the primary observable mechanism through which scientific articles acknowledge intellectual dependence on prior work. They support scholarly navigation, bibliometric indicators, citation-network analysis, and downstream measures of scientific influence. Yet most citation-based measures treat {\em citations as uniform signals}: a cited paper receives the same unit of credit whether it provides broad background, a methodological building block, or the central idea for the citing paper. This abstraction enables large-scale analysis, but it ignores the functional role of the citations.

Prior work\footnote{Related work is further discussed in Appendix~\ref{sec:related}.} has sought to enrich this view by incorporating the syntactic and semantic context of citations, including through citation function classification, sentiment analysis, citation summarization, and citation-based retrieval and recommendation~\cite{cca}. These approaches, however, primarily characterize citation contexts rather than quantify contributions. Recent LLM-based approaches assign holistic research-quality scores from limited surface text such as titles and abstracts~\cite{paper-eval}, but remain black-box document-level predictors rather than structured mechanisms for attributing contribution across sections, paragraphs, and references.

Therefore, in this paper, we study {\em contribution-based credit scoring}: decomposing a research article's credit between its own original contribution and the prior work it builds on. This problem has a natural {\em cooperative-game} interpretation: a paper and its references jointly form the final scientific work, and a principled allocation assigns credit according to marginal contribution, with the Shapley value providing a canonical solution concept.

Despite its conceptual appeal, this cooperative-game formulation is not directly operational. Shapley-style allocation requires counterfactual papers formed from arbitrary subsets of references, but such documents are unobserved.

We therefore replace counterfactual coalition evaluation with an observation-based structural attribution framework. The key observation is that the document itself provides structure: scientific articles have an inherent hierarchy that constrains how contributions can be distributed. We formalize this structure as a {\em contribution tree}, which conserves importance across the document hierarchy and supports a decomposition of the paper's credit into original and citation-derived contribution.
This allocation satisfies natural analogues of the Shapley principles, including efficiency, additivity, and hierarchical conservation. We then compute original and citation-derived contributions bottom-up by identifying citation blocks and aggregating their scores to the referenced works.

Making this framework operational requires a scalable way to assess the relative contribution of document components. 
While human experts could provide such judgments, eliciting them across papers, document units, and citation contexts would be prohibitively expensive. This motivates our use of LLMs as contribution estimators. Recent studies have explored LLMs as tools for supporting scientific assessment and peer review~\cite{deepreview,metareviewer}; more broadly, progress in LLM-assisted reviewing has led major computer science conferences, including NeurIPS 2026\footnote{\href{https://neurips.cc/Conferences/2026/ai-reviewing-experiment}{NeurIPS 2026 AI-Assisted Reviewing Experiment}.}, to explore opt-in AI-assisted reviewing workflows. We do not treat the LLM as a substitute for expert judgment, but as a {\em noisy} comparative estimator of local importance.

Finally, we extend the framework from individual papers to article collections. We construct a weighted citation graph in which each edge reflects the contribution assigned to a cited work by the citing paper. Credit is then propagated through this graph, so that a paper can receive contribution not only from papers that cite it directly, but also through indirect citation paths. We provide an efficient algorithm based on topological ordering that computes corpus-level contribution scores without enumerating citation paths. We further define a normalized influence score that removes each paper's self-contribution and measures its share of the cross-paper influence mass in the corpus.

We evaluate our framework on research papers annotated by human experts and API-based LLMs (Claude Sonnet-4.6 and GPT-OSS-120B). The results show that surface-level heuristics, such as citation frequency are insufficient for citation-level contribution scoring, while small local models such as {\tt qwen3:1.7b} perform reasonably well. Ablation studies and a corpus-level case study further support the effectiveness of our approach in separating influence from original contribution.

%% file: tree.tex
\input{figs/tree}

\section{Contribution Tree}\label{sec:tree}

\vspace{-2mm}
The contribution-based credit scoring of a research article $d$  can be formalized as a cooperative game with $d$ and its citations $\mathcal{C}(d)$ as the players, where $d$ captures the paper's own {\em original
contribution} and each $c\in\mathcal{C}(d)$ captures a {\em citation contribution} (Appendix~\ref{sec:warmup}).
Although this formulation, combined with the Shapley-value allocations~\cite{shapley1953value}, is conceptually natural and provides a principled theoretical foundation for our problem, it is impractical as it requires unobservable counterfactual
papers over subsets of references.

We therefore shift to {\em a structural framework} that operates directly on the paper's observable content.
We represent a scientific document as a rooted tree $T = (V, E)$, which we call the \emph{contribution tree} (Fig.~\ref{fig:cont-tree}). The root node $d \in V$ corresponds to the whole paper, and its importance is normalized as
\(
  \sigma(d) = 1.
\)
Each internal node $u \in V$ corresponds to a structural unit of the paper (a section, subsection, or subsubsection), and each directed edge $(u, v) \in E$ connects a parent unit to a child unit in the document hierarchy. The terminal structural units are paragraphs, which form the leaves of the structural skeleton.

We assign each node $v \in V$ a non-negative importance score $\sigma(v) \geq 0$. To each edge $(u, v) \in E$ we associate a weight $w(u, v) \in [0,1]$ representing the fraction of $u$'s importance that flows to child $v$. Importance is conserved at every
level. That is,
$\sum_{v \in \mathrm{ch}(u)} w(u,v)=1$, for every internal node $u$. Hence,

\vspace{-11mm}
\begin{align}\label{eq:computeScore}
   \hspace{8mm}\nonumber \sigma(v) &= w(\mathrm{par}(v),\, v)\cdot \sigma(\mathrm{par}(v)) 
   \\
   &= \prod_{u\in \text{path }r\warrow v} w(\mathrm{par}(u),\, u),
\end{align}

\vspace{-3mm}
where $\mathrm{par}(v)$ is the parent of $v$ and $\mathrm{ch}(u)$ represents the children of $u$.
In Fig.~\ref{fig:cont-tree}, for example, consider {\tt\small [Paragraph 1]}, under the {\tt\small [Introduction]} node.
The score of this paragraph is computed as
\(\sigma(\text{\tt\small Intro.Paragraph 1})=w_1\times w_{11}.\)
By induction, the scores of the root's children sum to $1$, and at every deeper level the scores of siblings sum to their parent's score. 
As a result, in Fig.~\ref{fig:cont-tree}, \(\sum_{i=1}^5w_i=1\), and \(w_{211}+w_{212}=1\).

Next, we study the key properties of the contribution tree. 
Although not exact counterparts, these properties are analogous to Shapley-value axioms.
\begin{itemize}[leftmargin=*,
  itemsep=0pt,
  topsep=0pt,
  parsep=0pt,
  partopsep=0pt]
    \item {\em Hierarchical conservation}: For any internal node $u$, the contribution of the parent is the sum of the importance scores of the children.
    \item {\em Efficiency}: The total contribution of an article is fully distributed across the leaves.
    \item {\em Additivity}: 
    For any finite collection of disjoint node sets $\{S_i\}_{i=1}^m$, $\sigma\!\left(\bigcup_{i=1}^m S_i\right)=\sum\sigma(S_i)$.
\end{itemize}

\begin{theorem}\label{th:axiom}
    Contribution Tree satisfies {efficiency}, {additivity}, and {hierarchical conservations}.
\vspace{-3mm}
\end{theorem}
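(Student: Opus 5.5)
The plan is to prove hierarchical conservation first, directly from the recursive definition in Eq.~\eqref{eq:computeScore}, and then obtain efficiency and additivity as consequences.

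\emph{Hierarchical conservation.} Fix an internal node $u$. For each child $v\in\mathrm{ch}(u)$ we have $\mathrm{par}(v)=u$, so Eq.~\eqref{eq:computeScore} gives $\sigma(v)=w(u,v)\,\sigma(u)$. Summing over the children and using the normalization $\sum_{v\in\mathrm{ch}(u)}w(u,v)=1$ gives
\[
\sum_{v\in\mathrm{ch}(u)}\sigma(v)=\sigma(u)\sum_{v\in\mathrm{ch}(u)}w(u,v)=\sigma(u).
\]

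\emph{Efficiency.} Let $L(u)$ denote the set of leaves in the subtree rooted at $u$. We show that $\sum_{\ell\in L(u)}\sigma(\ell)=\sigma(u)$ by strong induction on the height of $u$.
\begin{itemize}
\item For a leaf, the claim is trivial.
\item For an internal node $u$, the sets $L(v)$ for $v\in\mathrm{ch}(u)$ partition $L(u)$. This holds because $T$ is a tree: every leaf below $u$ has a unique path from $u$, so it lies under exactly one child. Applying the induction hypothesis to each child and then hierarchical conservation gives
\[
\sum_{\ell\in L(u)}\sigma(\ell)=\sum_{v\in\mathrm{ch}(u)}\sigma(v)=\sigma(u).
\]
\end{itemize}
Instantiating $u=d$ and using $\sigma(d)=1$ shows that the leaves carry total mass exactly $1$.

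\emph{Additivity.} The main obstacle is that the statement uses $\sigma(S)$ for a set $S$ without defining it. I will fix the natural definition $\sigma(S):=\sum_{v\in S}\sigma(v)$ for $S\subseteq V$. With disjoint sets $S_1,\dots,S_m$, additivity is then just the regrouping of a finite sum over a disjoint union:
\[
\sigma\Bigl(\bigcup_i S_i\Bigr)=\sum_{v\in\bigcup_i S_i}\sigma(v)=\sum_i\sum_{v\in S_i}\sigma(v)=\sum_i\sigma(S_i).
\]

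This definition double-counts mass if a set contains both a node and one of its descendants. I will therefore add a remark that the quantity is meaningful as a share of the paper for antichains, for example sets of leaves or sets of siblings. For such sets, efficiency and conservation guarantee $\sigma(S)\le 1$. They also guarantee consistency: replacing a node in $S$ by its children leaves $\sigma(S)$ unchanged. If the authors intend a leaf-based definition, namely $\sigma(S):=\sum\{\sigma(\ell):\ell$ a leaf below some node of $S\}$, the same argument goes through for disjoint subtrees using the partition fact from the efficiency step.
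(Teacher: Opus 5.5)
Your proof is correct and follows the paper's argument. Conservation comes from $\sigma(v)=w(u,v)\sigma(u)$ and the normalization of outgoing weights; your height induction with the leaf-partition fact is a more rigorous version of the paper's ``repeatedly replace an internal node by its children'' argument for efficiency; and additivity uses the same definition $\sigma(S)=\sum_{u\in S}\sigma(u)$ with a regrouped finite sum (your double-counting caveat for sets that are not antichains is a fair point the paper omits).
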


\begin{proof}
    See Appendix~\ref{app:axioms}.
    \vspace{-2mm}
\end{proof}

%% file: figs/tree.tex
\newcommand{\wgt}[1]{\textcolor{red!70!black}{$#1$}}

\begin{figure}[t]
\centering
\scriptsize
\ttfamily
\begin{tabular}{@{}l@{}}
$d$: Native Sparse Attention~\cite{yuan2025native}\\
~~~~+----\wgt{w_1}---- Introduction\\
~~~~|~~~~+----\wgt{w_{11}}---- Paragraph 1\\
~~~~|~~~~+-- \ldots\\
~~~~+----\wgt{w_2}---- Methodology\\
~~~~|~~~~+----\wgt{w_{21}}---- Background\\
~~~~|~~~~|~~~+----\wgt{w_{211}}---- Paragraph 1\\
~~~~|~~~~|~~~+----\wgt{w_{212}}---- Paragraph 2\\
~~~~|~~~~|\\
~~~~|~~~~+----\wgt{w_{22}}---- Overall Framework\\
~~~~|~~~~+---- \ldots\\
~~~~+----\wgt{w_3}---- Experiments\\
~~~~|~~~~+----\wgt{w_{31}}---- Pretraining Setup\\
~~~~|~~~~|~~~+---- \ldots\\
~~~~|~~~~+----\wgt{w_{32}}---- Baseline Methods\\
~~~~|~~~~+---- \ldots\\
~~~~+----\wgt{w_4}----  Efficiency Analysis\\
~~~~|\\
~~~~+----\wgt{w_5}---- \ldots
\end{tabular}
\vspace{-3mm}
\caption{Illustration of a Contribution Tree.}
\label{fig:cont-tree}
\vspace{-6mm}
\end{figure}

%% file: llmasoracle.tex
\subsection{Measuring the Edge Weights}
\label{sec:weights}
The edge weights $w(u,v)$ of the contribution tree are not observed directly. They are latent quantities induced by the content of the research article: the ideas expressed, the evidence provided, and the functional role played by each segment within its parent unit. The central challenge is therefore to estimate these weights from the text alone.


To formalize the weight measurement, we rely on a {\em contribution-estimation oracle}. Given a parent node $u$ and its children $\mathrm{ch}(u)$, the oracle returns a non-negative sibling-relative score $X(v \mid u)$ for each child $v \in \mathrm{ch}(u)$. These scores are intended to reflect the relative importance of the children within the local context of $u$. Since the oracle scores need not be calibrated across different parent nodes, we recover edge weights by normalizing only within each sibling set:

\vspace{-8mm}
\begin{equation}
\label{eq:local-normalization}
  \hspace{10mm}\widehat{w}(u,v)
  =
  \frac{X(v \mid u)}
  {\sum_{v' \in \mathrm{ch}(u)} X(v' \mid u)}
\end{equation}

\vspace{-2mm}
\noindent Thus, the oracle is required only to produce comparative non-negative scores. 

In practice, any instantiation of the oracle may be noisy. As a result, repeated queries for the same parent node may produce different sibling scores. We model the raw oracle response as

\vspace{-4mm}
\begin{equation}
\label{eq:estimation_with_noise}
  X(v \mid u)
  =
  \alpha(u) w(u,v) + \eta_{u,v},
\end{equation}

\vspace{-2mm}
\noindent where $\alpha(u) > 0$ is an arbitrary local scale factor and $\eta_{u,v}$ captures response noise. The scale factor reflects the fact that raw oracle scores are meaningful only within a sibling set: multiplying all scores under the same parent by a positive constant does not change the normalized allocation in Equation~\ref{eq:local-normalization}. This motivates estimating each local allocation from multiple noisy oracle responses.

Let $X^{(t)}(v \mid u)$ denote the raw score returned for child $v$ in the $t$-th query to the oracle for parent node $u$. For each valid response\footnote{A response is valid if its weight allocations are non-negative.}, we first normalize across siblings:

\vspace{-9mm}
\begin{equation}
\label{eq:sample-normalization}
  \widehat{w}^{(t)}(u,v)
  =
  \frac{X^{(t)}(v \mid u)}
  {\sum_{v' \in \mathrm{ch}(u)} X^{(t)}(v' \mid u)} 
\end{equation}

\vspace{-2mm}
Let $\mathcal{A}(u)$ denote the set of valid responses for node $u$. The final edge-weight estimate is the average normalized allocation

\vspace{-6mm}
\begin{equation}
\label{eq:aggregate-estimator}
  \widehat{w}(u,v)
  =
  \frac{1}{|\mathcal{A}(u)|}
  \sum_{t \in \mathcal{A}(u)}
  \widehat{w}^{(t)}(u,v)
\end{equation}

\vspace{-3mm}
This aggregation reduces variance while preserving local conservation: for every internal node $u$, the estimated outgoing weights remain non-negative and sum to one.

\vspace{-2mm}
\paragraph{Assumption.}
We require the contribution-estimation oracle to perform better than a random guess estimator with no information, which uniformly distributes the weights between the children, i.e.,
\(
  w_0(u,v) = \frac{1}{|\mathrm{ch}(u)|}.
\)
Ideally, a contribution estimator should satisfy

\vspace{-7mm}
\[
  \mathbb{E}\Big[
  |\widehat{w}(u,\cdot) - w(u,\cdot)|
  \Big]
  <
  |w_0(u,\cdot) - w(u,\cdot)|
\]

\vspace{-2mm}\noindent
meaning that its expected error is smaller than the error of assigning equal importance to all children. Since the true latent weights $w(u,v)$ are unobserved, we evaluate this condition empirically using human annotations as a proxy for ground truth in Section~\ref{sec:eval}.

In principle, the contribution estimates could be elicited from human experts who have read and understood the paper. However, expert annotation is slow and costly, making it difficult to scale. This motivates an automated instantiation of the contribution-estimation oracle.
In particular, as further explained in Appendix~\ref{app:llm}, we adopt the {\bf LLM-as-Estimator} model, in which LLMs take the role of the contribution-estimation oracles.

%% file: tech_vs_cite.tex
\vspace{-2mm}
\subsection{Original vs. Citation Contribution}
\label{app:technical-citation-contribution}

\vspace{-1mm}
The contribution tree provides a normalized allocation of importance across the structural hierarchy of a research article. In particular, the score $\sigma(v)$ of a node $v$ quantifies the total contribution assigned to the corresponding structural unit, such as a subsection. However, as discussed above, our main objective is to separate the paper's own original contribution from the contribution derived from its cited references.
Specifically, following Equation~\ref{eq:contribution}, for every node $v$ in the contribution tree, we seek to decompose its total score $\sigma(v)$ into two components: its own original contribution $\sigma_{\mathrm{orig}}(v)$, and the citation contribution $\sigma_{\mathrm{cite}}(v)$ of the references cited under $v$. Thus, for each node $v$, we require 
\(\sigma(v) = \sigma_{\mathrm{orig}}(v) + \sigma_{\mathrm{cite}}(v).\)

While the computation of the total node score $\sigma(v)$ is performed top-down through the contribution tree, as reflected in Equation~\ref{eq:computeScore}, the decomposition into original and citation-derived contribution is computed {\em bottom-up}. 

Consider a leaf node $v$ in the contribution tree, corresponding to a paragraph. If the paragraph does not include any citations, then the entire contribution of the paragraph is attributed to the paper itself. In this case, the original contribution of the node is
\(
\sigma_{\mathrm{orig}}(v) = \sigma(v),
\)
while
\(
\sigma_{\mathrm{cite}}(v) = 0.
\)


When the paragraph includes one or more citations, we first identify the {\em ``citation blocks''} within the paragraph. A citation block is a window of text in the paragraph that discusses the contribution, role, or relevance of one or more cited works. 
Intuitively, a citation block captures the portion of a paragraph whose contribution is attributed to cited work rather than to the paper's own original content.
After identifying citation blocks,\footnote{We use LLMs to identify citation blocks as text spans surrounding one or more citations.} we attach them as children of the corresponding paragraph nodes with an additional node corresponding to the text that does not belong to a citation block.

Let $\mathcal{B}(p)=\{b_1,\ldots,b_k\}$ denote the citation blocks contained in a paragraph $p$. The original contribution of $p$ is the part of its total score that is not assigned to its citation blocks, i.e., 
\(
\sigma_{\mathrm{orig}}(p)
=
\sigma(p)-\sum_{i=1}^{k}\sigma(b_i).
\)
Equivalently, 

\vspace{-4mm}
\begin{equation*}
\sigma_{\mathrm{cite}}(p)
=
\sum_{i=1}^{k}\sigma(b_i).
\end{equation*}

\vspace{-2mm}
For an internal node $v$, the decomposition is computed bottom-up by aggregating the corresponding scores of its children. That is,

\vspace{-7mm}
\begin{align*}
\sigma_{\mathrm{orig}}(v)
&=
\sum_{u\in \mathrm{ch}(v)} \sigma_{\mathrm{orig}}(u),
\\
\sigma_{\mathrm{cite}}(v)
&=
\sum_{u\in \mathrm{ch}(v)} \sigma_{\mathrm{cite}}(u).
\end{align*}

\vspace{-2mm}
Thus, the total {\bf original contribution} of the article is given by $\sigma_{\mathrm{orig}}(d)$, where $d$ is the root of the contribution tree.

For each paragraph $p$ containing citation blocks, the LLM estimator receives $\sigma_{cite}(p)$ and determines how this contribution should be distributed across these blocks within $p$. Let $\sigma(b)$ denote the contribution assigned to citation block $b$. It remains to allocate the score of each citation block to the references appearing in it. Let $\mathcal{C}(b)$ denote the set of references cited in citation block $b$. We distribute the block score uniformly among these references, assigning each $c\in \mathcal{C}(b)$ the local contribution
\(
\sigma_b(c)=\frac{\sigma(b)}{|\mathcal{C}(b)|}.
\)
Finally, the {\bf citation contribution} of each reference $c\in \mathcal{C}(p)$ is obtained by summing its contributions over all citation blocks within paragraph $p$ in which it appears:

\vspace{-4mm}
\begin{align}\label{eq:citation-contribution}
\sigma_p(c)
=\sum_{b \in p:~c\in \mathcal{C}(b)} \sigma_b(c) 
\end{align}


%% file: citationgraph.tex
\section{From Individual-level to Article Collections}\label{sec:citationgraph}

\vspace{-2mm}
We next extend the article-level contribution scoring framework to a corpus of research articles. Let $\mathcal{P}$ be a collection of papers within a specific research topic, such as {\tt [Sparse Attention]}. For each article $d \in \mathcal{P}$, the contribution tree decomposes the normalized credit of $d$ between its own original content and the references it cites. This gives a local view of the scoring, centered on a single paper $d$.

\vspace{-2mm}
\subsection{Corpus-level Contribution Score}
To move from the local perspective to the corpus-level, we consider the {\em citation graph} (aka. citation network) of $\mathcal{P}$. A citation graph, also called a citation network, is a directed graph whose nodes represent articles and whose edges represent citation relations between them~\citep{egghe1990introduction,zhao2015analysis}. Formally, we define
\(
G_{\mathcal{P}} = (\mathcal{P}, E),
\)
where \((d\rightarrow q) \in E\) if article $d$ cites the prior work $q$. 

Due to temporal nature of citations, citation graphs are commonly modeled as {\em directed acyclic graphs} (DAG)~\citep{clough2015transitive}. We adopt this convention and treat \(G_{\mathcal{P}}\) as a DAG with no cycle\footnote{For exceptional cases caused by factors such as versioning, the cycles are broken arbitrarily.}.

The temporal structure of the citation graph induces {\em a natural direction for credit propagation}: credit flows from a paper to the earlier papers that it cites, and recursively to papers reachable through longer citation paths.
This enables extending the contribution tree from an individual-document attribution model to a domain-level credit-flow model. 
The contribution tree determines how much of a paper's credit is assigned to each immediate reference. The citation graph then propagates this assigned credit along paths in \(G_{\mathcal{P}}\).
Thus, a paper may receive credit not only from articles that cite it directly, but also indirectly through chains of citation-mediated dependence.

\input{figs/dag}

We extend the citation graph by assigning a weight to every edge. For each edge
\((d \rightarrow q) \in E\), we set its weight to \(\sigma_d(q)\), where
\(\sigma_d(q)\) is the citation contribution assigned to reference \(q\) by the
contribution tree of article \(d\). Thus, the edge weight is not a raw citation
count, but a contribution-based score measuring how much of \(d\)'s credit is
attributed to \(q\).

For example, Fig.~\ref{fig:toy-citation-graph} shows contribution propagation
in a toy citation graph with papers \(\{a,b,c,d,e,f,g\}\). The edge
\(a \rightarrow f\) indicates that article \(a\) cites article \(f\), and its
weight \(\sigma_a(f)\) is the citation contribution of \(f\) to \(a\).

The weighted citation graph induces a recursive credit redistribution process.
Each article initially receives one unit of credit. A fraction
\(\sigma_{\mathrm{orig}}(d)\) of the credit received by article \(d\) is retained
as the article's own original contribution, while the remaining citation-derived
credit is redistributed to its references according to the edge weights
\(\sigma_d(c)\), for \(c \in C(d)\). Consequently, credit propagates along
directed paths in \(G_{\mathcal{P}}\). 

Let \(\{(a\warrow d)\}_{\mathcal{P}}\) denote the set of directed paths in \(G_{\mathcal{P}}\). The amount of credit from \(a\)
that reaches \(d\) along a path \(\pi\) is the product of the edge weights on
that path, out of which $\sigma_{\mathrm{orig}}(d)$ is retained by $d$ and the rest is propagated to its references.
Therefore, the corpus-level contribution of an article
$d\in\mathcal{P}$ is 

\vspace{-7mm}
\begin{align}
\label{eq:corpus_contribution}
\sigma_{\mathcal{P}}(d)
=
&\sigma_{\mathrm{orig}}(d)
\Big( 1 + 
\\
\nonumber & \sum_{\text{path}\in \{(a\warrow d)\}_{\mathcal{P}}}
\prod_{(u\rightarrow v)\in \text{path}}
\sigma_u(v)
\Big)
\end{align}

\vspace{-3mm}
\begin{theorem}\label{th:1}
Let \(\mathcal C_{\mathcal P}(d)=\mathcal C(d)\cap \mathcal P\) denote the citations of \(d\) that belong to the corpus. Suppose that, after restricting the citation graph to \(\mathcal P\), each paper satisfies \(
    \sigma_{\mathrm{orig}}(d)
    +
    \sum_{c\in \mathcal C_{\mathcal P}(d)} \sigma_d(c)
    =
    1
\). Then, \(
    \sum_{d\in\mathcal P}\sigma_{\mathcal P}(d)=|\mathcal P|.
\)
\vspace{-2mm}
\end{theorem}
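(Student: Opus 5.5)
We interpret the corpus-level score as a flow of credit. For each ordered pair $(a,d)$ with $a,d\in\mathcal P$, let
\[
M(a,d)=\sum_{\pi\in\{(a\warrow d)\}_{\mathcal P}}\prod_{(u\rightarrow v)\in\pi}\sigma_u(v),
\]
and set $M(a,a)=1$, treating the empty path as contributing $1$. Equation~\ref{eq:corpus_contribution} then reads $\sigma_{\mathcal P}(d)=\sigma_{\mathrm{orig}}(d)\sum_{a\in\mathcal P}M(a,d)$. We read $M(a,d)$ as the amount of $a$'s unit credit that arrives at $d$, of which $d$ keeps the fraction $\sigma_{\mathrm{orig}}(d)$. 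The plan is to exchange the order of summation:
\[
\sum_{d}\sigma_{\mathcal P}(d)=\sum_{a\in\mathcal P}\Big(\sum_{d}\sigma_{\mathrm{orig}}(d)M(a,d)\Big).
\]
It then suffices to prove, for each fixed source $a$, the \emph{per-source conservation}
\[
\sum_{d}\sigma_{\mathrm{orig}}(d)\,M(a,d)=1 .
\]
Summing this identity over the $|\mathcal P|$ sources gives $|\mathcal P|$.

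To prove per-source conservation, we first write the path sums recursively. Splitting each path into $a\warrow u$ followed by a final edge $u\rightarrow d$ gives
\[
M(a,d)=\mathbb 1[a=d]+\sum_{u:\,d\in\mathcal C_{\mathcal P}(u)}M(a,u)\,\sigma_u(d).
\]
The decomposition is unique and all sums are finite because $G_{\mathcal P}$ is a DAG with finitely many paths. Next, fix a topological order of $G_{\mathcal P}$ in which citing papers come before cited papers. Define $S_k$ as the retained mass $\sum_{d\text{ among first }k}\sigma_{\mathrm{orig}}(d)M(a,d)$ plus the outstanding mass $\sum_{\text{edges }u\to v,\ u\text{ among first }k,\ v\text{ not}}M(a,u)\sigma_u(v)$.

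We show by induction that $S_k=1$ for all $k$. At the start only $a$ holds mass $1$, since nodes preceding $a$ receive $M=0$. When we process node $d$, its incoming outstanding mass, which equals $M(a,d)$ by the recursion (plus $1$ if $d=a$), is removed. This mass is replaced by $\sigma_{\mathrm{orig}}(d)M(a,d)$ retained at $d$ and $\sum_{c\in\mathcal C_{\mathcal P}(d)}M(a,d)\sigma_d(c)$ sent out along its edges. The hypothesis $\sigma_{\mathrm{orig}}(d)+\sum_{c\in\mathcal C_{\mathcal P}(d)}\sigma_d(c)=1$ makes these two terms sum to exactly $M(a,d)$, so $S_k$ is unchanged. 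After all nodes are processed no outstanding edges remain, because every edge of $G_{\mathcal P}$ ends at a node of $\mathcal P$. Hence the retained mass equals $1$.

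The main obstacle is bookkeeping rather than any real difficulty. The first point is the convention $M(a,a)=1$, which accounts for the ``$1+$'' term in Equation~\ref{eq:corpus_contribution}. The second is that the conservation hypothesis is stated with respect to $\mathcal C_{\mathcal P}(d)$: credit sent to references outside the corpus would leak, and the hypothesis is exactly what rules this out. Sinks need no separate case, since they satisfy $\sigma_{\mathrm{orig}}(d)=1$ by the hypothesis. An alternative I would keep as a check is the matrix form $W=(\sigma_u(v))$. This $W$ is nilpotent, so $M=(I-W)^{-1}$, and the hypothesis says $W\mathbf 1=\mathbf 1-\sigma_{\mathrm{orig}}$. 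Then $M\sigma_{\mathrm{orig}}=M(I-W)\mathbf 1=\mathbf 1$, and summing the entries of this vector gives the total $|\mathcal P|$.
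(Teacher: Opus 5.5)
Your proof is correct and is essentially the paper's argument. The paper runs the same topological-order conservation step on the aggregate mass $M(d)=\sum_{a}M(a,d)$: each node splits its incoming mass into $\sigma_{\mathrm{orig}}(d)M(d)$ retained and $\sigma_d(c)M(d)$ forwarded, and the hypothesis makes this split lossless. You run the same step per source, which gives a slightly stronger per-source identity and makes explicit the path-sum recursion and the retained-plus-in-flight invariant that the paper leaves implicit. Your matrix identity $M\sigma_{\mathrm{orig}}=(I-W)^{-1}(I-W)\mathbf 1=\mathbf 1$ is a clean independent check.

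One bookkeeping fix is needed. As written, $S_k$ omits the unit sitting at $a$ before $a$ is processed, so $S_0=0$ rather than $1$. Also, the incoming in-flight mass at $d$ is $M(a,d)$ minus $1$ when $d=a$, not plus $1$. Adding the term ``$1$ if $a$ is not among the first $k$'' to $S_k$ makes the induction go through exactly as you describe.
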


\begin{proof}
    See Appendix~\ref{app:proof}.
    \vspace{-2mm}
\end{proof}

In Appendix~\ref{app:topological}, we provide an efficient algorithm for computing the corpus-level contribution scores based on a topological ordering~\cite{kleinberg2006algorithm} of the citation graph.

\vspace{-2mm}
\subsection{Normalized Influence Score}

The corpus-level contribution score $\sigma_{\mathcal{P}}(d)$ measures the total credit retained by article $d$ within the corpus $\mathcal{P}$. This total includes the unit of credit assigned to $d$ by
itself, through its own original contribution $\sigma_{\mathrm{orig}}(d)$.

For some analyses, however, we are interested specifically in the extent to which an article {\em influences other papers} in the corpus. In this case, the self-contribution term can obscure the signal of interest. This issue is particularly pronounced when the corpus is small or when the citation graph is sparse or weakly connected: only a limited amount of contribution mass may reach a paper through citation paths, so the self-retained original credit can dominate the total corpus-level score.

To isolate the credit that $d$ receives from other articles, we define the external contribution mass of $d$ as

\vspace{-7mm}
\begin{align*} \sigma_{\mathcal{P}\backslash d}(d) &=\sigma_\mathrm{orig}(d)\sum_{\pi\in \{a\warrow d\}_{\mathcal{P}}} \prod_{(u\rightarrow v)\in \pi} \sigma_u(v) \\ &= \sigma_{\mathcal{P}}(d) - \sigma_{\mathrm{orig}}(d) \end{align*}

\vspace{-2mm}
We then define the normalized influence score of $p$ as

\vspace{-12mm}
\begin{align}\label{eq:norm_influence}
I_{\mathcal{P}}(d)
=
\frac{
\sigma_{\mathcal{P}\setminus d}(d)
}{
\sum_{d'\in\mathcal{P}}
\sigma_{\mathcal{P}\setminus d'}(d')
}
\end{align}

\vspace{-2mm}\noindent
The score $I_{\mathcal{P}}(d)$ therefore measures the fraction of all cross-paper influence mass in the corpus that is attributed to article $d$.

%% file: figs/dag.tex

\begin{figure}[t]
\centering
\resizebox{0.3\textwidth}{!}{
\begin{tikzpicture}[
    >=Stealth,
    vertex/.style={circle, draw, minimum size=7mm, inner sep=0pt},
    edge/.style={->, thick}
]

\node[vertex] (a) at (0,2) {$a$};
\node[vertex] (b) at (4,2) {$b$};

\node[vertex] (c) at (-1,0) {$c$};
\node[vertex] (d) at (2,0) {$d$};
\node[vertex] (e) at (5,0) {$e$};

\node[vertex] (f) at (1,-2) {$f$};
\node[vertex] (g) at (4,-2) {$g$};

\draw[edge] (a) -- node[midway,left] {$\sigma_a(c)$} (c);
\draw[edge] (a) -- node[midway, right] {$\sigma_a(d)$} (d);
\draw[edge] (b) -- node[midway, right] {$\sigma_b(d)$} (d);
\draw[edge] (b) -- node[midway,right] {$\sigma_b(e)$} (e);

\draw[edge] (a) -- node[midway,right] {$\sigma_a(f)$} (f);
\draw[edge] (c) -- node[midway,left] {$\sigma_c(f)$} (f);
\draw[edge] (d) -- node[midway, right] {$\sigma_d(f)$} (f);
\draw[edge] (d) -- node[midway, right] {$\sigma_d(g)$} (g);
\draw[edge] (e) -- node[midway,right] {$\sigma_e(g)$} (g);

\end{tikzpicture}
}
\vspace{-4mm}
\caption{Contribution propagation on a citation graph.}
\label{fig:toy-citation-graph}
\vspace{-6mm}
\end{figure}

%% file: exp.tex
\section{Experiments}\label{sec:eval}

We evaluate our framework across multiple open-source LLM backbones under both section-level and citation-level evaluation settings. Full implementation details, evaluation setup, metrics, and baseline descriptions are provided in  Appendix~\ref{app:implementation-setup}.

\begin{figure*}[t]
\centering
\includegraphics[width=1\textwidth]{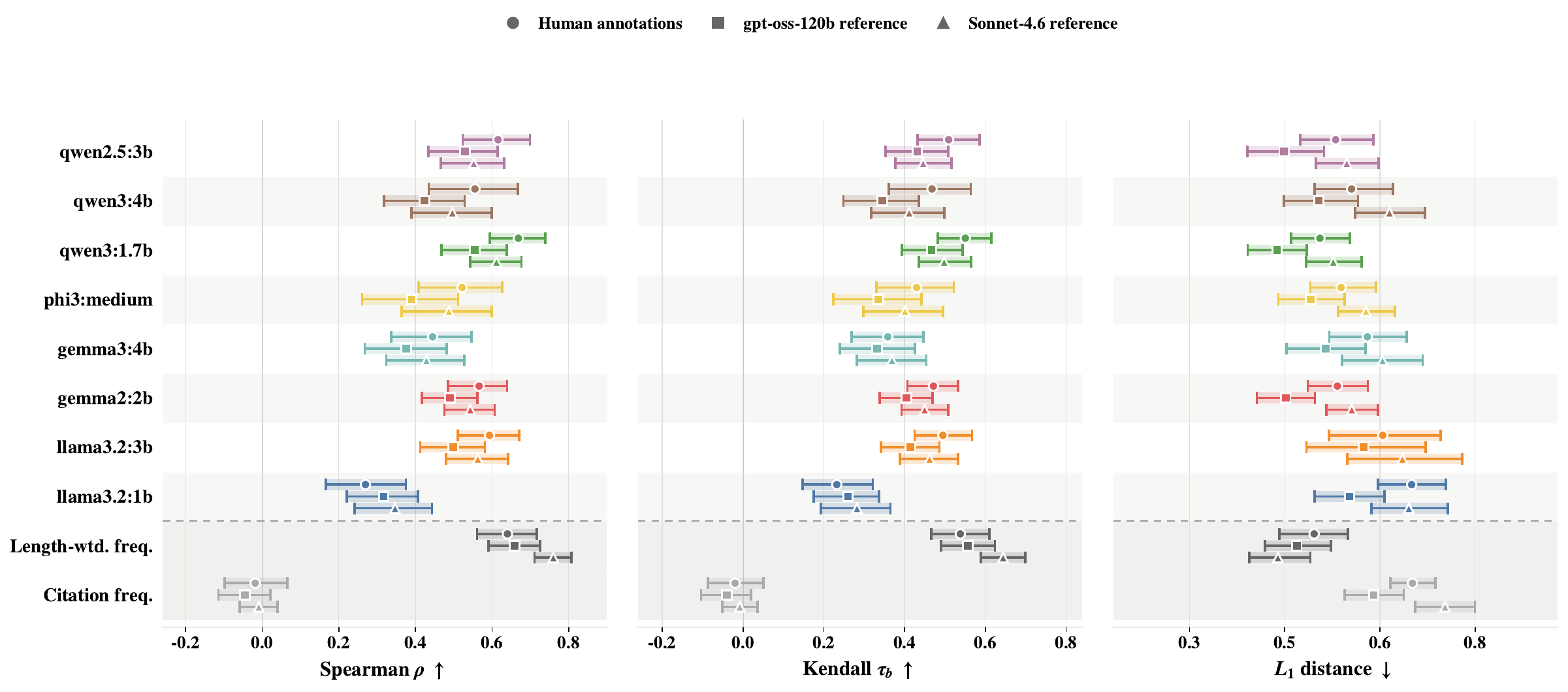}
\vspace{-3mm}
\caption{Evaluation of local LLMs on 
section-level score distributions across 50 papers using human annotations, GPT-OSS-120B, and Claude Sonnet 4.6 as reference sources. Markers show mean agreement across papers, with 95\% bootstrap confidence intervals. $\bullet$, $\blacksquare$, and $\blacktriangle$ denote comparisons with human annotations, GPT-OSS-120B, and Claude Sonnet 4.6, respectively. Higher Spearman's $\rho$ and Kendall's $\tau_b$ indicate stronger rank agreement, while lower $L_1$ distance indicates closer agreement in score allocation. Among the evaluated local LLMs, \texttt{qwen3:1.7b} shows the strongest and most consistent agreement across the three reference evaluators, while the \emph{length-weighted frequency} baseline remains highly competitive, particularly in agreement with the API-based references.}
\label{fig:section_validation}
\vspace{-4mm}
\end{figure*}


\begin{figure*}[t]
\centering
\includegraphics[width=0.9\textwidth]{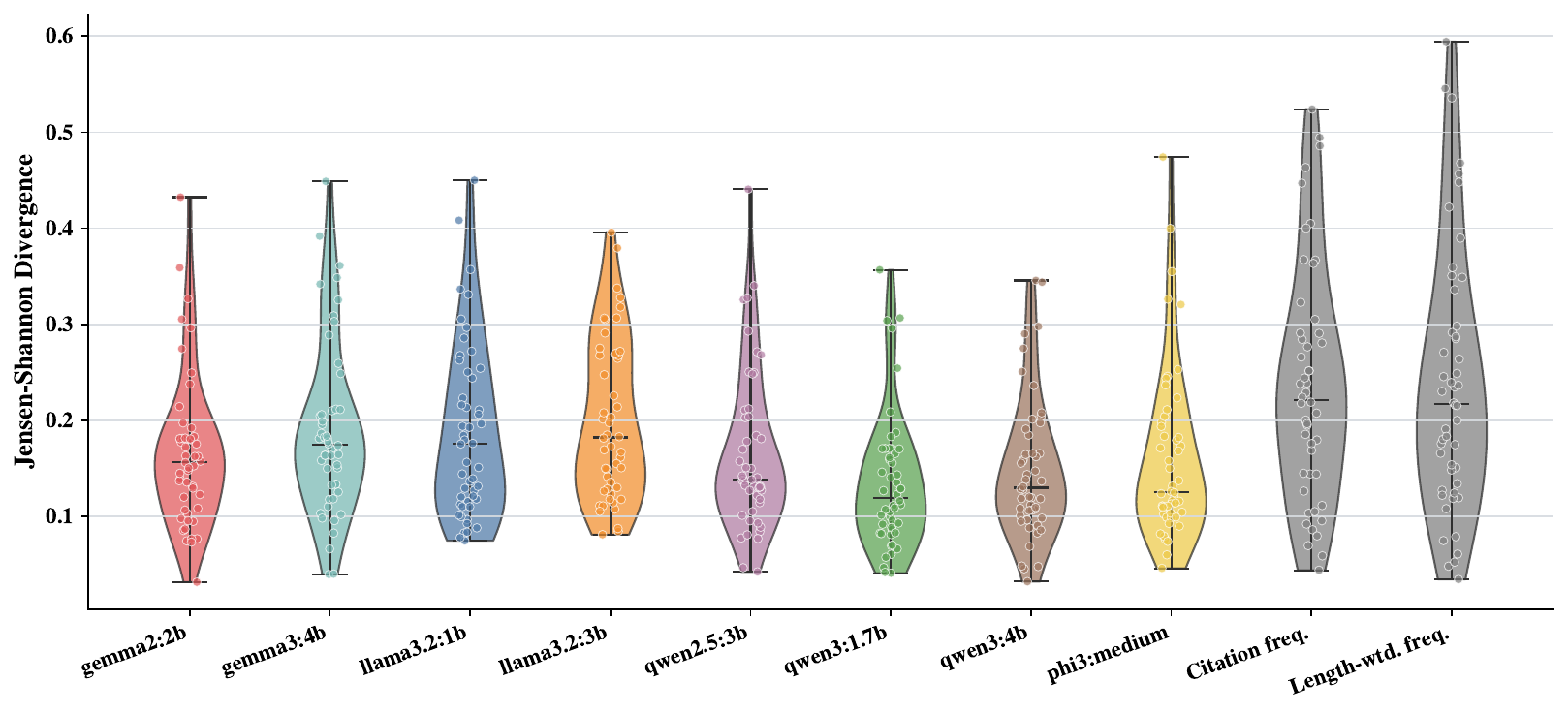}
\vspace{-4mm}
\caption{Distribution of Jensen-Shannon divergence (JSD) between citation contribution scores produced by each model or baseline and the Claude Sonnet 4.6 reference across 50 papers. Lower values indicate closer distributional agreement. \texttt{qwen3:1.7b} exhibits the lowest overall divergence among the local models, while the frequency-based baselines, \emph{citation frequency}, and \emph{length-weighted frequency}, show higher divergence and greater variability.}
\label{fig:citation_jsd_vs_sonnet4.6}
\vspace{-3mm}
\end{figure*}

\begin{figure*}[t]
\centering
\includegraphics[width=0.9\textwidth]{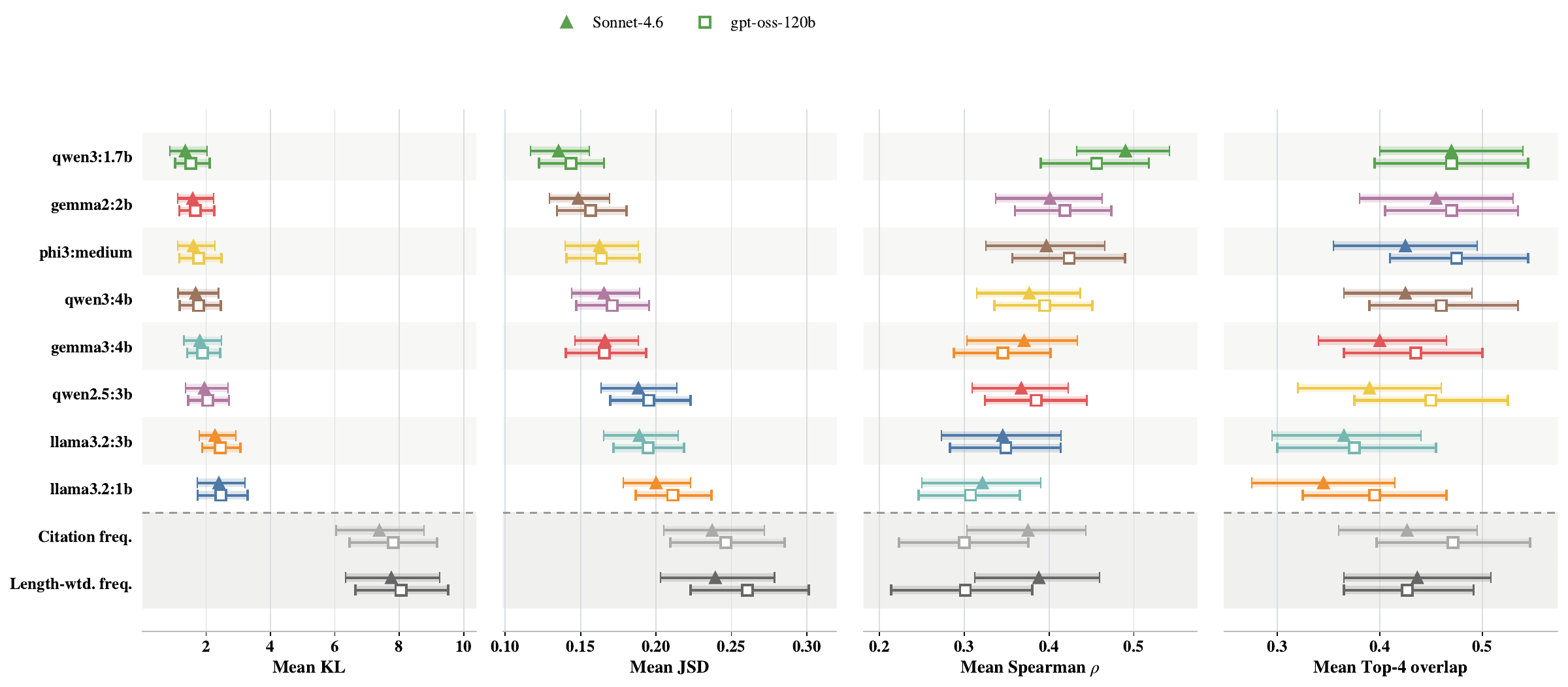}
\vspace{-4mm}
\caption{Mean citation-level agreement with the Claude Sonnet 4.6 and GPT-OSS-120B reference annotations across 50 papers, with 95\% bootstrap confidence intervals. KL divergence and JSD measure distributional disagreement, with lower values indicating better agreement, while Spearman's $\rho$ and Top-4 overlap measure ranking and top-citation agreement, with higher values indicating better performance. Triangles denote Claude Sonnet 4.6 and squares denote GPT-OSS-120B comparisons. The \emph{citation frequency}, and \emph{length-weighted frequency} baselines are shown at the bottom. \texttt{qwen3:1.7b} provides the strongest overall agreement with the Claude Sonnet 4.6 reference and performs best across all four metrics.}

\label{fig:citation_4metrics_2references}
\vspace{-5mm}
\end{figure*}

\subsection{Validation Results}
\label{subsec:validation}

Figures~\ref{fig:section_validation},
\ref{fig:citation_jsd_vs_sonnet4.6}, and
\ref{fig:citation_4metrics_2references} report validation results across the
50 human-annotated papers. We focus on relatively small local LLMs to assess
whether practical and deployable models can serve as contribution-estimation
oracles. In addition to human annotations, we use Claude Sonnet 4.6
~\cite{anthropic2026sonnet46} and GPT-OSS-120B~\cite{gpt-oss-120b} as
complementary reference evaluators. As shown in
Table~\ref{tab:api_human_alignment} and Appendix~\ref{app:human-noise},
both API-based annotators exhibit strong agreement with human judgments.
Using multiple reference evaluators allows us to examine whether the
relative performance of the local models is robust to the choice of annotator.

\paragraph{Section-level validation (Fig.~\ref{fig:section_validation}).} Among the local LLMs, \texttt{qwen3:1.7b} provides the strongest and most consistent section-level estimates across all three reference sources. Against human annotations, it achieves the highest rank agreement among the LLMs, with
Spearman $\rho=0.669$ and Kendall $\tau_b=0.551$, while also obtaining the lowest $L_1$ distance among the LLMs at $0.505$. The same pattern persists against GPT-OSS-120B and Claude Sonnet 4.6, where \texttt{qwen3:1.7b} again performs best among the evaluated LLMs across all three metrics. This
consistency suggests that its advantage is not specific to a single reference source.

The strong performance of the \emph{length-weighted frequency} baseline reveals a different aspect of the section-level task. Against Claude Sonnet 4.6, it achieves the strongest overall agreement across all three metrics, with $\rho=0.760$, $\tau_b=0.644$, and $L_1=0.439$. Against GPT-OSS-120B, it also
achieves the highest rank correlations, although \texttt{qwen3:1.7b} obtains the lowest $L_1$ distance. The pattern changes against human annotations, where \texttt{qwen3:1.7b} achieves the strongest rank agreement and the length-weighted baseline only slightly improves on its $L_1$ distance.
Together, these results indicate that section length captures a substantial structural signal for coarse contribution allocation, particularly for the API-generated references, but does not fully account for human judgments of relative section importance. Content-sensitive estimation remains beneficial
when recovering the human section ranking.

This distinction motivates the citation-level evaluation. If much of section-level importance can be recovered from coarse structural cues, citation attribution provides a more demanding test of whether the estimator captures contribution beyond document structure and citation frequency.

\begin{figure*}[t]
\centering
\includegraphics[width=1\textwidth]{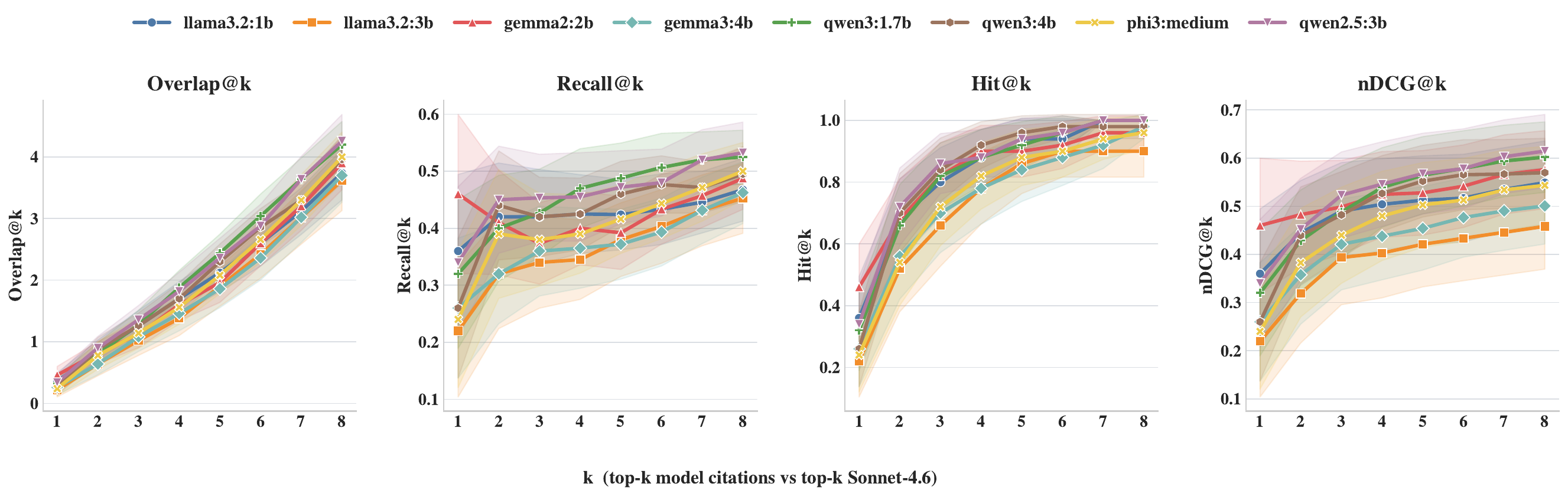}
\vspace{-8mm}
\caption{Top-$k$ sensitivity of citation agreement with the Claude Sonnet 4.6 across $k \in \{1,\ldots,8\}$. For each $k$, the top-$k$ citations produced by each model are compared with the top-$k$ citations in the reference ranking using Overlap@$k$, Recall@$k$, Hit@$k$, and nDCG@$k$. Higher values indicate stronger agreement for all four metrics. Shaded regions denote 95\% bootstrap confidence intervals over the 50 papers. Model differences are most pronounced at small and intermediate values of $k$, while Hit@$k$ approaches saturation as $k$ increases.}
\label{fig:top_k_sensitivity}
\vspace{-5mm}
\end{figure*}


\paragraph{Citation-level validation (Figs.~\ref{fig:citation_jsd_vs_sonnet4.6} and
\ref{fig:citation_4metrics_2references}).}
The citation-level results reveal a substantially different pattern from the section-level evaluation. Figure~\ref{fig:citation_4metrics_2references} compares the complete citation-score distributions, global citation rankings, and agreement among the highest-scoring citations against Claude Sonnet 4.6 and GPT-OSS-120B. Against Claude Sonnet 4.6, \texttt{qwen3:1.7b} achieves the strongest mean agreement across all four metrics, obtaining the lowest KL divergence and JSD together with the highest Spearman correlation and Top-4 overlap. The separation from the frequency-based baselines is particularly pronounced for KL divergence and JSD, indicating that simple citation and length-weighted frequencies do not reproduce the contribution-mass distribution across the cited works.

This contrast with the section-level results is important. Section length
provides a strong prior for coarse contribution allocation, but it is insufficient for determining which individual references contribute most strongly to a paper. The stronger citation-level performance of the LLM-based estimators therefore suggests that contribution attribution requires information beyond section length and citation frequency, including the role of a cited work within the surrounding content.

The aggregate results are also reflected at the individual-paper level.
Figure~\ref{fig:citation_jsd_vs_sonnet4.6} shows that the LLM-based estimators are concentrated at substantially lower JSD values than the two frequency baselines. \texttt{qwen3:1.7b} achieves the lowest mean divergence, while the frequency baselines exhibit higher divergence and heavier upper tails, indicating less consistent agreement with the reference citation distributions across papers. At the same time, the distributions of the stronger local LLMs overlap considerably, suggesting that their differences are smaller than the gap between model-based estimation and the simple frequency heuristics. Paper-level pairwise comparisons provide additional support for this result, with \texttt{qwen3:1.7b} achieving the largest total number of wins across KL divergence, JSD, Spearman's $\rho$, and Top-4 overlap(Appendix~\ref{app:exp:modelsize}).

\begin{table}[t]
\centering
\scriptsize
\setlength{\tabcolsep}{5pt}
\begin{tabular}{lcc}
\toprule
Metric & Sonnet-4.6 & GPT-OSS-120B \\
\midrule
Sp. $\uparrow$     & 0.865 {\scriptsize[$0.828,\,0.900$]} & 0.799 {\scriptsize[$0.750,\,0.844$]} \\
K-$\tau_b$ $\uparrow$ & 0.779 {\scriptsize[$0.728,\,0.826$]} & 0.701 {\scriptsize[$0.644,\,0.757$]} \\
$L_1$ $\downarrow$    & 0.320 {\scriptsize[$0.280,\,0.363$]} & 0.348 {\scriptsize[$0.302,\,0.399$]} \\
Ov.@4 $\uparrow$   & 1.720 & 1.320 \\
Rec.@4 $\uparrow$  & 0.430 & 0.330 \\
Hit@4 $\uparrow$   & 0.900 & 0.840 \\
nDCG@4 $\uparrow$  & 0.481 & 0.375 \\
\bottomrule
\end{tabular}
\vspace{-2mm}
\caption{Alignment of API-generated scores with human annotations over 50 papers. The first three rows report mean section-level agreement with 95\% bootstrap confidence intervals; the last four rows report citation top-4 agreement.}
\label{tab:api_human_alignment}
\vspace{-4mm}
\end{table}

\input{ablation}

%% file: ablation.tex
\subsection{Sensitivity and Ablation Analysis}\label{ablation}

\paragraph{Effect of $k$.}
Figure~\ref{fig:top_k_sensitivity} examines the sensitivity of citation-level agreement to the retrieval cutoff over $k\in\{1,\ldots,8\}$. We use $k=4$ as the primary evaluation point because the human annotation task explicitly asks experts to identify the four citations they consider most foundational to the paper. Evaluating a broader range of $k$ allows us to determine whether the observed model differences depend strongly on this choice.

Model behavior varies most clearly at small and intermediate values of $k$. At $k=4$, \texttt{qwen3:1.7b} achieves the highest Overlap@$4$ and Recall@$4$, while \texttt{qwen2.5:3b} obtains the highest nDCG@$4$ and \texttt{qwen3:4b} achieves the highest Hit@$4$. As $k$ increases, Hit@$k$ rises rapidly and approaches saturation for most models, making it progressively less discriminative. Recall@$k$ and nDCG@$k$ continue to
separate models at larger cutoffs and therefore provide a more informative view of how much of the high-contribution citation set is recovered and how well that set is ranked. Overlap@$k$ also increases with $k$ by construction, so its values are most informative when comparing models at the same cutoff.

\paragraph{Effect of model size.}
Parameter count does not exhibit a consistent relationship with contribution estimation quality. This pattern is particularly clear when the section-level and citation-level evaluations are considered together. For \texttt{llama3.2}, increasing model size from 1B to 3B substantially improves
section-level agreement, yet the 1B variant performs better across the citation Top-$k$ metrics in Figure~\ref{fig:top_k_sensitivity}. For the Gemma models, \texttt{gemma2:2b} consistently outperforms \texttt{gemma3:4b} in the citation-level sensitivity analysis, while \texttt{qwen3:1.7b} generally
outperforms \texttt{qwen3:4b} on Overlap@$k$, Recall@$k$, and nDCG@$k$ despite having fewer than half as many parameters. These results show that {\em larger models are not necessarily better contribution estimators} and that relative performance can depend on whether the task concerns coarse section-level allocation or fine-grained citation attribution. See Appendix~\ref{app:exp:modelsize} for a detailed pairwise analysis showing that larger models do not consistently outperform smaller ones across the evaluated metrics.

\input{citation-network}

\input{score_definitions_table}

%% file: citation-network.tex
\subsection{Corpus-level Case Study}
\label{sec:citation-network}
The weighted citation graph $G_{\mathcal{P}}$ produces two complementary quantities for each paper $d \in \mathcal{P}$, the corpus-level contribution score $\sigma_{\mathcal{P}}(d)$
(Equation~\ref{eq:corpus_contribution}) and the normalized cross-paper influence score $I_{\mathcal{P}}(d)$ (Equation~\ref{eq:norm_influence}). Unlike raw citation counts, which treat all incoming citations equally, the proposed framework weights citation relationships by their estimated contribution and recursively propagates credit across citation paths.

To evaluate the framework on a downstream corpus-level task, we construct an
induced citation graph consisting of 17 mutually connected papers and apply the full scoring pipeline using \texttt{qwen3:1.7b} as the backbone model. The resulting scores are reported in Table~\ref{tab:influence_scores}, and the corresponding weighted citation graph is shown in Fig.~\ref{fig:citation_network_Ip} in Appendix~\ref{app:corpus-case}.

\begin{table}[t]
\centering
\resizebox{.5\textwidth}{!}{
\begin{tabular}{@{}l@{}c@{}r@{\hspace{2mm}}r@{}}
\toprule
\textbf{Paper} & \textbf{Indg} & {$\boldsymbol{I_\mathcal{P}(d)}$} & {$\boldsymbol{\sigma_{\mathcal{P}}(d)}$} \\
\midrule
N-Machine-Translation\cite{n-machine-trans}             & 3          & \textbf{4.75e-1} & 8.34e-1 \\
AttentionAllYouNeed\cite{vaswani2017attention}          & \textbf{6} & 1.22e-1          & 6.54e-1 \\
shift-reduce-const-pars\cite{shift-reduce}              & 1          & 9.16e-2          & 7.28e-1 \\
LSM-Memory\cite{lsm-memo}                               & 1          & 8.76e-2          & 7.61e-1 \\
struc-attention-net\cite{struct-attention}              & 1          & 5.80e-2          & 7.35e-1 \\
output-embed-llm\cite{output-embed}                     & 1          & 5.67e-2          & 7.57e-1 \\
yolov10\cite{yolov10}                                   & 1          & 3.99e-2          & 7.37e-1 \\
gemma2\cite{gemma2}                                     & 2          & 2.31e-2          & 8.41e-1 \\
Gen-Sequences-RNN\cite{gen-sequ-rnn}                    & 2          & 1.86e-2          & 7.15e-1 \\
Deep-Residual-Learning\cite{deep-redi-learning}         & 2          & 1.15e-2          & 6.31e-1 \\
llama3\cite{llama3}                                     & 1          & 9.25e-3          & 8.42e-1 \\
ExploringLimitLLMs\cite{exploring}                      & 1          & 6.96e-3          & 7.55e-1 \\
Smarter-Better-Faster-Longer\cite{smarter-better}       & 0          & 0                & 6.82e-1 \\
kimiK2\cite{kimik2}                                     & 0          & 0                & \textbf{8.54e-1} \\
YOLO-World-Real-Time\cite{yolo-world}                   & 0          & 0                & 8.04e-1 \\
d-fine-redefine\cite{d-fine}                            & 0          & 0                & 7.86e-1 \\
simPo\cite{simpo}                                       & 0          & 0                & 8.33e-1 \\
\bottomrule
\end{tabular}
}
\vspace{-1mm}
\caption{Normalized cross-paper influence score $I_{\mathcal{P}}(d)$ and corpus-level contribution
$\sigma_{\mathcal{P}}(d)$ for each paper. $I_{\mathcal{P}}(d)$ reflects propagated cross-paper influence
through direct and indirect citations. 
$\sigma_{\mathcal{P}}(d)$ captures the total corpus-level contribution of each paper, combining its intrinsic technical score with influence received
via citation paths.}
\label{tab:influence_scores}
\vspace{-3mm}
\end{table}

\vspace{-2mm}
\paragraph{Corpus-level contribution.}
The two scores capture substantially different properties of a paper within the induced corpus. \textit{kimiK2}~\cite{kimik2} achieves the highest corpus-level contribution score, with
$\sigma_{\mathcal{P}}=0.854$, despite having $I_{\mathcal{P}}=0$. Similarly, \textit{simPo}~\cite{simpo} and \textit{d-fine-redefine}~\cite{d-fine} retain relatively high $\sigma_{\mathcal{P}}$ scores while receiving no propagated cross-paper influence. Under our edge convention, where edges point from a citing paper to a cited paper, these are zero-indegree nodes in the induced graph and
therefore receive no propagated credit from other papers in $\mathcal{P}$. Their corpus-level scores consequently reflect their intrinsic technical contribution rather than inherited influence.
\paragraph{Cross-paper influence.}
The influence ranking differs substantially from both corpus-level contribution and raw in-degree. \textit{N-Machine-Translation}~\cite{n-machine-trans} has the largest normalized influence score, $I_{\mathcal{P}}=0.475$, despite receiving only three citations within this corpus. In contrast, \textit{Attention Is All You Need}
~\cite{vaswani2017attention} has the highest in-degree of six but a substantially smaller influence score of $I_{\mathcal{P}}=0.122$. The two papers account for approximately $59.7\%$ of the propagated influence mass. This contrast illustrates that citation frequency
alone does not determine propagated influence. The identity and influence of the citing papers, the learned edge weights, and the paper's position along indirect citation paths all affect $I_{\mathcal{P}}$.

The same effect appears among papers with identical in-degree. \textit{yolov10}~\cite{yolov10}, \textit{LSM-Memory}~\cite{lsm-memo}, and
\textit{Exploring-Limit-LLMs}~\cite{exploring} each receive one in corpus citation, yet their influence scores differ substantially. More broadly, the three highest values in Table~\ref{tab:influence_scores} belong to different papers across the three quantities. \textit{kimiK2} has the largest $\sigma_{\mathcal{P}}$, \textit{N-Machine-Translation} has the largest $I_{\mathcal{P}}$, and \textit{Attention Is All You Need} has the largest in-degree. The framework therefore distinguishes intrinsic corpus-level contribution, propagated structural influence, and citation frequency rather than collapsing them into a single notion of impact.

Finally, these scores should be interpreted as measurements within the selected induced graph. A paper may have zero propagated influence because no paper in the selected corpus cites it, even if it is influential outside the corpus. Likewise, both $\sigma_{\mathcal{P}}$ and
$I_{\mathcal{P}}$ depend on which papers and citation paths are included. The resulting rankings therefore characterize local structural relationships within $\mathcal{P}$ rather than global scientific impact.

%% file: conclusion.tex
\section{Conclusion and Future Work}

We introduced contribution-based credit scoring, a framework that decomposes a paper's credit into original and citation-derived contribution and propagates it through weighted citation graphs. Our results suggest that LLMs can provide useful local contribution estimates beyond citation frequency and document-structure heuristics.

An immediate future direction is to develop models specialized for contribution scoring. The strong performance of small local models suggests that training alignment may matter more than scale, motivating instruction-tuning compact LLMs for contribution estimation in scientific articles. Another direction is to apply our framework to large research corpora, where contribution-weighted citation graphs may reveal domain-level influence signals beyond citation counts.

%% file: limitation.tex
\section*{Limitations}

Our framework inherits several limitations from both the annotation process and the structure of scientific writing itself.

The first challenge is the absence of a fully objective ground truth for contribution attribution. Even expert researchers may disagree on which citations are most central to a paper or on how importance should be distributed across sections. Our annotation study reflects this ambiguity directly: after examining model outputs, annotators frequently revised their original top-4 selections. The reported evaluation scores should therefore be interpreted as agreement with a particular set of expert judgments rather than exact measurements of scientific contribution.

The evaluation benchmark is also relatively small. Our experiments cover 50 annotated papers and a separate 17-paper citation graph for the corpus-level analysis. While this is sufficient to study model behavior and validate the framework qualitatively, it does not fully represent the scale and diversity of universal scientific citation networks.

The framework also depends heavily on the underlying language model. Our experiments show that model scale alone is not a reliable predictor of performance, and different model families exhibit noticeably different ranking behavior. The resulting contribution scores may therefore change with the choice of backbone model, prompt design, quantization level, or sampling strategy. Repeated-query averaging reduces variance but does not remove model-specific biases.

Another limitation comes from the document structure assumption itself. The framework distributes contribution through the section hierarchy and local sibling comparisons, which works well for standard scientific papers but may fail when important ideas are concentrated in short sections, figures, appendices, or scattered discussions. In some cases, section length also acts as an unintended signal, since both annotators and models often associate longer methodological sections with higher importance.

The corpus-level propagation analysis should also be interpreted carefully. The propagated influence scores depend strongly on the selected citation subgraph. A paper may receive little propagated influence simply because the surrounding corpus is incomplete or because descendant works are absent from the induced graph. Unless the corpus approaches the scale and coverage of the global scholarly citation network, the resulting rankings should be interpreted
as measurements of structural influence within the selected corpus rather
than estimates of universal scientific impact.


%% file: appendix/Toc.tex
\subsection*{Table of Content}
\begin{enumerate}[leftmargin=*]
    \item Related Work \dotfill \ref{sec:related}
    \item Modeling as a Cooperative Game\dotfill \ref{sec:warmup}
    \item Proof of Theorem~\ref{th:axiom} \dotfill \ref{app:axioms}
    \item LLM-as-an-Estimator\dotfill\ref{app:llm}
    \item Proof of Theorem~\ref{th:1}\dotfill\ref{app:proof}
    \item Efficient Computation of Corpus-level Contribution Scores \dotfill \ref{app:topological}
    \item Implementation Details\dotfill \ref{app:implementation-setup}
    \item Extended Experiment Details\dotfill\ref{app:exp:extended}
    \item LLM Prompts\dotfill \ref{app:prompts}
\end{enumerate}

%% file: appendix/related.tex
\section{Related Work}\label{sec:related}

Prior work on citation analysis can be broadly divided into three categories: (1) frequency-based bibliometric methods, (2) content-based citation analysis, and (3) computational approaches based on natural language processing.

Traditional bibliometric approaches treat citations as uniform signals of impact, relying primarily on citation counts and h-index. While effective for large-scale analysis, these methods ignore the textual context and functional role of citations, treating all references as equally informative.

Content-based citation analysis (CCA) extends this paradigm by incorporating the syntactic and semantic context of citations. This line of work studies where citations appear within a paper (e.g., introduction versus methodology) and why they are used (e.g., background, comparison, or extension). Subsequent work in natural language processing has further operationalized these ideas through tasks such as citation function classification, sentiment analysis, citation summarization, and citation-based retrieval and recommendation~\cite{cca}. These approaches provide a richer understanding of citation usage, but they remain primarily qualitative or categorical, focusing on labeling or interpreting individual citation contexts.

Recursive citation-ranking methods instead model scientific influence through propagation over citation graphs, where citations from influential papers contribute more strongly to the score of cited work \cite{senanayake2015pagerank, ma2008bringing, west2010eigenfactor}. These approaches capture structural dependence and long-range influence patterns across citation paths, extending beyond raw citation frequency alone. The resulting scores, however, are typically derived from graph topology or citation frequency and do not account for the semantic role or contribution-level importance of citations within the citing document \cite{walker2007ranking, valenzuela2015identifying}. Work on influential citation detection partially addresses this limitation by identifying references that are central to a paper rather than merely co-mentioned \cite{cca, Cohan2019Structural}. Our framework differs from these approaches by combining hierarchical contribution attribution inside documents with semantically weighted propagation across citation graphs.

Recent work has also explored using large language models (LLMs) to directly evaluate research quality by assigning a single scalar score to a paper based on its title and abstract \cite{paper-eval}. These methods show moderate correlation with expert judgments, suggesting that LLMs can capture high-level signals of quality. However, they treat evaluation as a black-box mapping from surface text to a holistic score, implicitly conflating dimensions such as novelty, rigor, and significance without explicitly modeling them. Moreover, because they operate at the document level and primarily rely on abstracts, they do not capture how contributions are distributed across different sections of a paper or how individual components contribute to the overall value. As a result, they lack interpretability and do not provide a mechanism for attributing credit within the document.

Another key limitation of existing approaches is the lack of a unified, quantitative framework for attributing contribution across the entire structure of a document. Prior methods typically operate at a local level (e.g., sentence or citation context) and treat units independently, without modeling how different parts of a paper jointly contribute to its overall value. Moreover, they do not explicitly distinguish between the contribution of original content and that derived from cited work, nor do they enforce consistency constraints across different levels of granularity.

In contrast, our work formulates citation analysis as a hierarchical importance-allocation problem over the full document structure. We represent a paper as a rooted tree and assign normalized importance scores that are conserved across levels, ensuring that the total importance of the paper is distributed among its constituent sections, paragraphs, and citations. This introduces a globally consistent view of contribution in which all components are coupled through conservation constraints.

Our framework further decomposes paragraph-level importance into original and citation-derived components, enabling fine-grained attribution of credit to both original contributions and referenced prior work. Aggregating original contributions across the hierarchy produces a normalized original-contribution score for the full document, while distributing citation-derived importance across referenced works yields a normalized citation-contribution score for each reference relative to the paper as a whole.

Overall, our approach shifts citation analysis from local qualitative interpretation and black-box document-level scoring toward a global, quantitative, and structurally grounded framework for contribution attribution in scientific documents.

%% file: appendix/warmup.tex
\section{Modeling as a Cooperative Game}\label{sec:warmup}

In this section, we formalize the contribution-based credit scoring of the research articles as a {\em cooperative game}.
A scientific document does not emerge in isolation; rather, it is the result 
of a combination of original contributions and the integration of prior work 
through citations. We model this interaction as a cooperative setting in which 
multiple entities jointly contribute to the final outcome.

Let $d$ denote a target document, and let $\mathcal{C}(d)$ be the set of all 
references cited by $d$. We define the set of players as
\begin{equation}\label{eq:contribution}
N = \{d\} \cup \mathcal{C}(d),
\end{equation}
where 
\begin{enumerate}[leftmargin=*]
    \item {\bf original contribution}: the document itself represents its original contribution, and
    \item {\bf citation contribution}: each reference $c \in \mathcal{C}(d)$ represents a cited work that contributes to $d$.
\end{enumerate}

We consider a transferable utility (TU) cooperative game~\cite{myerson2013game} 
defined by the pair $(N, \nu)$, where
\(
\nu : 2^N \rightarrow \mathbb{R}^{+}
\)
is a characteristic function. For any coalition $S \subseteq N$, the value 
$\nu(S)$ represents the total contribution that can be generated by the subset 
of players in $S$. Intuitively, $\nu(S)$ captures the extent to which the ideas, 
methods, or context provided by the elements in $S$ are sufficient to produce 
a version of the document's outcome.

In particular, the grand coalition $N$ corresponds to the complete document, 
and we normalize its value as
\(
\nu(N)=1,
\)
representing the total contribution of the document. The goal is therefore to 
allocate this total value among the players in $N$ in a manner that reflects 
their respective contributions.

A standard solution concept for such attribution problems is the 
{\bf Shapley value}~\cite{shapley1953value}. For each player $i\in N$, the 
Shapley value $\phi_i$ is defined as

{\small
\begin{align*}
\phi_i
=
\sum_{S\subseteq N\setminus\{i\}}
\frac{|S|!(|N|-|S|-1)!}{|N|!}
\Big(
\nu(S\cup\{i\})-\nu(S)
\Big).
\end{align*}
}

This quantity represents the expected marginal contribution of player $i$ 
over all possible permutations of the coalition-formation process.

The Shapley value satisfies several desirable properties, including 
efficiency, dummy, and additivity, making it a principled 
mechanism for fair credit allocation. As a result, it provides a natural way to quantify how much each citation and the document's own original content contribute to the overall value of the work.

Despite its conceptual appeal, the Shapley value formulation is not directly 
applicable in our setting. Its computation requires access to the 
characteristic function $\nu(S)$ for all coalitions $S\subseteq N$, which 
corresponds to evaluating counterfactual versions of the document 
constructed from arbitrary subsets of references. Such counterfactuals are 
not observable: a scientific document is produced once, with a fixed set of 
references, and the process cannot be rerun under alternative coalitions. 
Consequently, unlike standard cooperative games, we do not have access to a 
value oracle that can evaluate $\nu(S)$.

Furthermore, this setting lacks the structural properties that enable 
practical credit assignment in related domains such as reinforcement 
learning~\cite{foerster2018counterfactual,li2021shapley}. There is no repeated 
interaction, no controllable environment, and no well-defined reward signal 
that can be sampled across different trajectories or team compositions. 
Combined with the exponential complexity of Shapley value 
computation~\cite{chen2023algorithms}, these limitations render direct 
application infeasible.

This motivates our shift from counterfactual coalition evaluation to a 
{\em structural, observation-based attribution framework} grounded directly 
in the document itself. Our approach is based on the notion of a 
{\em contribution tree}, which we introduce next.

%% file: appendix/axioms.tex
\section{Proof of Theorem~\ref{th:axiom}}\label{app:axioms}

\noindent
\textbf{Theorem~\ref{th:axiom}.} 
    \emph{Contribution Tree satisfies {\em efficiency}, {\em additivity}, and {\em hierarchical conservation}}.

\begin{proof}
We prove each property in turn.

\paragraph{Hierarchical conservation.}
For every internal node $u$, the outgoing edge weights form a distribution:
\[
\sum_{v\in \mathrm{ch}(u)} w(u,v)=1.
\]
Since each child receives importance
\[
\sigma(v)=w(u,v)\sigma(u),
\]
we have
    \begin{align*}
        \sum_{v\in \mathrm{ch}(u)}w(u,v)\, \sigma(u)
        &=  \sigma(u)\sum_{v\in \mathrm{ch}(u)}w(u,v) \\
        &= \sigma(u)
    \end{align*}
Thus, the importance of every internal node is exactly distributed among its children.

\paragraph{Efficiency.}
We show that the total importance assigned to the leaves is equal to the importance of the root. Starting from the root, whose importance is normalized as $\sigma(d)=1$, hierarchical conservation implies that replacing any internal node by its children preserves the total importance mass. Repeatedly applying this replacement until all internal nodes have been expanded to leaves therefore preserves the total mass. Hence,
\[
\sum_{\ell\in \mathrm{leaves}(T)} \sigma(\ell)=\sigma(d)=1.
\]
Thus, the full contribution of the article is distributed across the leaves.

\paragraph{Additivity.}
Let $S\subseteq V$ be any set of nodes whose aggregate contribution is defined as the sum of their individual importance scores:
\[
\sigma(S)=\sum_{u\in S}\sigma(u).
\]
For two disjoint sets of nodes $S_1,S_2\subseteq V$, we have

\begin{align*}
\sigma(S_1\cup S_2)
&=
\sum_{u\in S_1\cup S_2}\sigma(u)\\
&=
\sum_{u\in S_1}\sigma(u)+\sum_{u\in S_2}\sigma(u)\\
&=
\sigma(S_1)+\sigma(S_2).
\end{align*}

More generally, for any finite collection of pairwise disjoint node sets $\{S_i\}_{i=1}^m$,
\[
\sigma\!\left(\bigcup_{i=1}^m S_i\right)
=
\sum_{i=1}^m \sigma(S_i).
\]
Therefore, contribution is additive across disjoint components of the tree.
\end{proof}

%% file: appendix/llm-estimator.tex
\section{LLM-as-an-Estimator}
\label{app:llm}

Section~\ref{sec:weights} defines contribution estimation through an abstract oracle: given a parent node $u$ and its children $\mathrm{ch}(u)$, the oracle returns non-negative sibling-relative scores $X(v \mid u)$, which are normalized to obtain edge-weight estimates $\widehat{w}(u,v)$. In principle, such scores could be elicited from human experts who have read and understood the paper. However, expert annotation is costly, slow, and impractical at scale, especially because contribution judgments must be made repeatedly across document units and citation contexts. This motivates an automated instantiation of the contribution-estimation oracle.

We instantiate the oracle using large language models (LLMs). Recent work on LLM-as-a-judge evaluation suggests that sufficiently capable LLMs can approximate human preferences and quality judgments in open-ended evaluation tasks~\cite{gu2024survey}, often achieving substantial agreement with human evaluators, while also exhibiting biases and variance that must be explicitly controlled~\citep{zheng2023judging,liu2023geval,chen2024humans}. This line of work has also begun to extend to scientific assessment and peer-review assistance, where LLMs have been studied as tools for structured paper evaluation, review generation, and meta-review support~\cite{deepreview,metareviewer}. More broadly, this progress has motivated major computer science conferences, including NeurIPS 2026, to explore opt-in AI-assisted reviewing workflows.

Accordingly, we do not treat the LLM as a source of ground-truth contribution scores or as a substitute for expert scientific judgment. Instead, we use it as a noisy comparative estimator of local importance, consistent with the noise model in Equation~\ref{eq:estimation_with_noise}. This distinction is important: our framework does not ask the model to decide the absolute quality, novelty, or significance of a paper. Rather, it asks the model to perform a more constrained relative-comparison task within the observed document structure.

For each internal node $u$, the model is given the content associated with $u$ together with the child units $v \in \mathrm{ch}(u)$. It then returns non-negative scores \(X^{(t)}(v \mid u)\), intended to reflect the relative contribution of each child within the local context of $u$. The LLM is therefore not asked to assign a global paper-level quality or credit score, nor to evaluate citations independently of the document structure. Rather, it performs a sequence of localized sibling-comparison tasks, each asking how the importance mass of a parent unit should be redistributed among its children. The resulting raw scores are normalized across siblings using Equation~\ref{eq:sample-normalization}, and the response set $\mathcal{A}(u)$ is then aggregated using Equation~\ref{eq:aggregate-estimator}.

This design also mitigates, though does not eliminate, known limitations of LLM-based evaluation. Since only sibling-relative scores are used, arbitrary differences in the scale of model outputs across different parent nodes are removed by local normalization. Since multiple samples are collected for each parent node, response variance can be reduced by averaging over valid normalized allocations. Finally, because the contribution tree enforces non-negativity and conservation constraints, the final scores remain structurally consistent even when individual model responses are noisy.

%% file: appendix/proof.tex
\section{Proof of Theorem~\ref{th:1}}\label{app:proof}

\noindent
\textbf{Theorem~\ref{th:1}.} 
\emph{
    Let \(\mathcal C_{\mathcal P}(d)=\mathcal C(d)\cap \mathcal P\) denote the citations of \(d\) that belong to the corpus. Suppose that, after restricting the citation graph to \(\mathcal P\), each paper satisfies \(
    \sigma_{\mathrm{orig}}(d)
    +
    \sum_{c\in \mathcal C_{\mathcal P}(d)} \sigma_d(c)
    =
    1
    \). Then, \(
        \sum_{d\in\mathcal P}\sigma_{\mathcal P}(d)=|\mathcal P|.
    \)
}

\begin{proof}
For each paper \(d\in\mathcal P\), define the total mass arriving at \(d\) by
\[
    M(d)
    =
    1+
    \sum_{\pi\in\{(a\rightsquigarrow d)\}_{\mathcal P}}
    \prod_{(u\to v)\in\pi}\sigma_u(v),
\]
where the leading \(1\) is the unit mass initially assigned to \(d\), and the
summation is over all nonempty directed paths in the corpus-level citation graph
that end at \(d\). Following Equation~\ref{eq:corpus_contribution},
\[
    \sigma_{\mathcal P}(d)
    =
    \sigma_{\mathrm{orig}}(d)M(d).
\]

We show that contribution propagation conserves the total mass. Since the citation
graph is a DAG, fix a topological ordering of its nodes. When a paper \(d\) is
processed in this ordering, all mass that can reach \(d\) through directed paths
has already been accumulated in \(M(d)\). The paper retains
\(
    \sigma_{\mathrm{orig}}(d)M(d)
\)
as its own corpus-level original contribution, and propagates to each
in-corpus citation \(c\in\mathcal C_{\mathcal P}(d)\) the amount
\(
    \sigma_d(c)M(d).
\)
Therefore, the total mass accounted for at \(d\), including both retained and
propagated mass, is
\[
    \sigma_{\mathrm{orig}}(d)M(d)
    +
    \sum_{c\in\mathcal C_{\mathcal P}(d)} \sigma_d(c)M(d).
\]
By the theorem assumption,
\[
    \sigma_{\mathrm{orig}}(d)
    +
    \sum_{c\in\mathcal C_{\mathcal P}(d)} \sigma_d(c)
    =
    1.
\]
Hence,
\[
    \sigma_{\mathrm{orig}}(d)M(d)
    +
    \sum_{c\in\mathcal C_{\mathcal P}(d)} \sigma_d(c)M(d)
    =
    M(d).
\]
Thus, processing \(d\) does not generate any additional mass; it only partitions
the incoming mass \(M(d)\) into a retained part and a propagated part.

The only mass introduced into the system is the initial unit mass assigned to
each paper in \(\mathcal P\). Hence the total injected mass is
\[
    \sum_{d\in\mathcal P}1 = |\mathcal P|.
\]
Because the graph is finite and acyclic, every propagated unit of mass follows a
finite directed path. If a paper \(d\) has no outgoing in-corpus citation edges,
then \(\mathcal C_{\mathcal P}(d)=\emptyset\), and the normalization condition
implies \(\sigma_{\mathrm{orig}}(d)=1\). Thus, all mass arriving at such a sink paper is retained. Consequently, {\em no mass is lost, and no additional mass is created during propagation}.
As a result, the sum of all retained corpus-level contributions equals the
total injected mass:
\[
    \sum_{d\in\mathcal P}\sigma_{\mathcal P}(d)
    =
    \sum_{d\in\mathcal P}\sigma_{\mathrm{orig}}(d)M(d)
    =
    |\mathcal P|.
\]
\end{proof}

%% file: appendix/topological.tex
\section{Efficient Computation of Corpus-level Contribution Scores}\label{app:topological}
Let \(\{(a\warrow d)\}_{\mathcal{P}}\) denote the set of directed paths in \(G_{\mathcal{P}}\).
The number of paths in \(\{(a\warrow d)\}_{\mathcal{P}}\) can be as large\footnote{$n=\vert\mathcal{P}\vert$. A chain with \(n\) nodes has \(\frac{n(n-1)}{2}\) paths.} as \(O(n^2)\), while each path can have length as large as \(O(n)\). As a result, a brute-force approach for 
computing the corpus-level contribution scores using Equation~\ref{eq:corpus_contribution} would take \(O(n^3)\). Therefore, in this section, we present an efficient algorithm with a time complexity linear to the number of edges in $G_{\mathcal{P}}$.

At a high level, instead of directly applying Equation~\ref{eq:corpus_contribution}, for each article \(d\in \mathcal{P}\), we first collect the total mass of contributions from its parents who cite this article, and only when the mass from all of its parents is collected, we propagate it to its children.

Specifically, we follow the topological ordering of the nodes in \(G_\mathcal{P}\) for mass collection and propagation. A topological ordering is a linear ordering of the nodes such that for every directed edge \((u\rightarrow v)\in E\), node \(u\) appears before node \(v\) in the ordering~\cite{kleinberg2006algorithm}. 
As a result, processing the nodes $p$ according to the topological ordering guarantees that all of its parents are processed before it; hence the total mass from all its parents are collected, ready to be propagated to its children.

\input{alg}

Algorithm~\ref{alg:topological-ordering} adapts the topological-ordering algorithm~\cite{kleinberg2006algorithm} by maintaining two quantities for each article \(d\): its current indegree \(\mathrm{indeg}(d)\), which counts the number of unprocessed parents of \(d\), and its accumulated contribution mass \(\mathrm{mass}(d)\).

Initially,
\(\mathrm{mass}(d)=1\) for every article \(d\), corresponding to the unit credit
assigned to the article itself. As the algorithm processes earlier nodes in the
topological order, additional credit mass is propagated to \(d\) through incoming
citation paths.

When an article \(d\) is removed from the zero-indegree set, all credit that can
reach \(d\) through previously processed paths has already been accumulated in
\(\mathrm{mass}(d)\). The corpus-level original contribution of \(d\) is then
computed as \(\sigma_{\mathcal{P}}(d) = \sigma_{\mathrm{orig}}(d)\,\mathrm{mass}(d).\)
The remaining mass is propagated to each cited article \(v\) through the weighted edge \((d\rightarrow v)\). Specifically, article \(v\) receives an additional amount \(\sigma_d(v)\,\mathrm{mass}(d)\), which corresponds to the fraction of \(p\)'s accumulated credit attributed to
reference \(v\). After this propagation step, the indegree of \(v\) is decremented. Once all parents of \(v\) have been processed, \(v\) is added to the zero-indegree set and becomes ready for processing.
This procedure is equivalent to summing, for every article \(d\), the contributions of all directed paths ending at \(d\), as in Equation~\ref{eq:corpus_contribution}. However, it avoids explicit path enumeration by aggregating all path contributions locally at each node before propagating them forward. 

\begin{lemma}
    The time complexity of Algorithm~\ref{alg:topological-ordering} is \(O(|\mathcal{P}|+|E|)\)
\end{lemma}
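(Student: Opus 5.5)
The plan is to follow the standard amortized analysis of Kahn's topological-sort algorithm, as adapted in Algorithm~\ref{alg:topological-ordering}. I will charge every operation either to a node of $G_{\mathcal P}$ or to an edge of $E$, and show that each node and each edge is charged only a constant number of times.

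\textbf{Initialization.} The algorithm sets $\mathrm{mass}(d)=1$ for every $d\in\mathcal P$ and computes $\mathrm{indeg}(d)$ for every $d$. I assume the graph is stored as adjacency lists. Then the indegrees are obtained in one scan over all adjacency lists, incrementing $\mathrm{indeg}(v)$ for each edge $(u\to v)$. This costs $O(|\mathcal P|+|E|)$. Building the initial zero-indegree set takes one more pass over the nodes, costing $O(|\mathcal P|)$.

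\textbf{Main loop.} The key invariant is that each node is inserted into the zero-indegree set exactly once, and hence removed exactly once. A node enters the set either at initialization, when its indegree is $0$, or at the moment its indegree counter is decremented from $1$ to $0$. Counters only decrease, so neither event can happen twice. Acyclicity of $G_{\mathcal P}$ guarantees that every node eventually reaches indegree $0$; this is the standard correctness argument for topological ordering, and I will cite it rather than reprove it.

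When $d$ is removed, the algorithm does three things:
\begin{itemize}
\item computes $\sigma_{\mathcal P}(d)=\sigma_{\mathrm{orig}}(d)\,\mathrm{mass}(d)$ in $O(1)$ time;
\item iterates over the out-neighbours $v$ of $d$, doing $O(1)$ work per edge: adding $\sigma_d(v)\,\mathrm{mass}(d)$ to $\mathrm{mass}(v)$, decrementing $\mathrm{indeg}(v)$, and possibly inserting $v$ into the set;
\item performs the set operations themselves.
\end{itemize}
I assume the edge weights $\sigma_d(v)$ are precomputed and stored on the edges, so reading them is $O(1)$. I also assume the zero-indegree set is a queue or stack, so insertion and removal are $O(1)$. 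Summing over all removals, the per-node work is $O(|\mathcal P|)$ and the per-edge work is $\sum_d |\mathrm{out}(d)| = |E|$.

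\textbf{Main obstacle.} The delicate point is that the out-neighbour loop must run exactly once per node. Otherwise edges could be re-scanned, or mass could be propagated more than once. The invariant above settles this, because each node is processed at most once. The remaining issues are modelling assumptions rather than mathematical difficulties: arithmetic on mass values is taken as unit cost, and the contribution trees that produce $\sigma_{\mathrm{orig}}$ and $\sigma_d(\cdot)$ are treated as precomputed inputs rather than part of the algorithm's cost. Adding the initialization and main-loop bounds gives $O(|\mathcal P|+|E|)$.
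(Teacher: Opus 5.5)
Your proposal is correct and uses the same node-once, edge-once counting argument as the paper's proof. You simply make explicit the bookkeeping the paper leaves implicit: adjacency lists, $O(1)$ set operations, the $O(|E|)$ indegree-counting pass over the edges, and the invariant that each node enters the zero-indegree set at most once.
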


\begin{proof}
    Initializing the indegree and mass variables for each node in $G_{\mathcal{P}}$ takes $O(|\mathcal{P}|)$. After that, following the topological order, the algorithm visits each node once, while visiting each edge once to update the indegree variables. As a result, the time complexity of the algorithm is \(O(|\mathcal{P}|+|E|)\).
\end{proof}


%% file: alg.tex
\begin{algorithm}[t]
\caption{Corpus-Contribution Propagation}
\label{alg:topological-ordering}
\begin{algorithmic}[1]
\Require The DAG \(G_{\mathcal{P}}=(\mathcal{P},E)\)
\Ensure The corpus-level original contributions (Equation~\ref{eq:corpus_contribution})

    \For{\(d\in \mathcal{P}\)}
        \State \(\mathrm{indeg}(v) \gets 0;~\mathrm{mass}(v)\gets 1\) 
    \EndFor
    \For{\((u\rightarrow v)\in E\)}
        \State \(\mathrm{indeg}(v) \gets \mathrm{indeg}(v)+1\)
    \EndFor
    
    \State \(S \gets \{v\in V : \mathrm{indeg}(v)=0\}\)

    \While{\(S\neq \emptyset\)}
        \State Remove an arbitrary node \(d\) from \(S\)
        \State \(\sigma_{\mathcal{P}}(d) \gets \sigma_\mathrm{orig}(d)\, \mathrm{mass}(d)\)
        \For{\((d\rightarrow v)\in E\)}
            \State $\mathrm{mass}(v)\gets \mathrm{mass}(v)+\sigma_d(v)\mathrm{mass}(d)$
            \State \(\mathrm{indeg}(v) \gets \mathrm{indeg}(v)-1\)
            \If{\(\mathrm{indeg}(v)=0\)}
                \State Add \(v\) to \(S\)
            \EndIf
        \EndFor
    \EndWhile

    \State {\bf Return} \(\{\sigma_\mathcal{P}(d)\}_{d\in\mathcal{P}}\)

\end{algorithmic}
\end{algorithm}

%% file: appendix/exp.tex
\section{Implementation Details}
\label{app:implementation-setup}
We instantiate the framework using an LLM in a repeated-query setting. For each parent node in the section hierarchy, the model receives the parent context along with its child segments and is prompted to assign non-negative scores across the children. These local scores are normalized over the sibling set, and the final allocation for each parent is computed by averaging across accepted samples. In all experiments, we draw five samples per query and permit up to five retries per sample to recover complete, parseable outputs in cases where a response is incomplete or malformed. We evaluate the framework across eight local LLMs: \texttt{gemma2:2b} \cite{gemma2}, \texttt{gemma3:4b} \cite{team2025kimi}, \texttt{llama3.2:3b}, \texttt{llama3.2:1b} \cite{llama3}, \texttt{qwen2.5:3b} \cite{yang2025qwen3}, \texttt{qwen3:1.7b}, \texttt{qwen3:4b} \cite{qwen3technicalreport}, and \texttt{phi3:medium} \cite{abdin2024phi}.

\subsection{Evaluation Setup}
\label{app:eval-protocol}

We evaluate our framework on a set of 50 annotated papers using three independent annotation sources: \emph{human experts}, Claude Sonnet 4.6, and GPT-OSS-120B, with the latter two accessed via API calls.

Human annotators are provided with detailed annotation guidelines, described in Appendix~\ref{app:annot}. Because constructing a complete annotation of the paper hierarchy, from high level sections down to individual citations, is a time consuming task, human annotators are asked to provide annotations at a more targeted level. Specifically, they assign contribution scores to the paper's high level sections and identify the four cited papers they consider most foundational to the work. In contrast, API based language models do not face the same annotation time constraints, allowing us to obtain scores throughout the full hierarchical structure of each paper, including citation level annotations.

\paragraph{Section-level evaluation.}
For each paper we compare the model's predicted section-weight vector
$\hat{w}$ against the reference vector $w^*$ using three complementary
metrics. First, to assess whether the oracle satisfies the assumption
introduced in Section~\ref{sec:weights}, we report the $L_1$
distance $\|\hat{w} - w^*\|_1$, which captures the total absolute
discrepancy in how importance mass is distributed across
sections~\cite{gibbs2002}. Second, since the oracle is designed to produce
comparative rather than absolute scores (Equation~\ref{eq:local-normalization}),
we assess ranking quality via Spearman $\rho$~\cite{spearman1904} and
Kendall $\tau_b$~\cite{zhou-etal-2024-llm}, both of which measure
agreement in the relative ordering of section importance rather than in
exact score magnitudes~\cite{paper-eval}.
Higher values are better for $\rho$ and $\tau_b$, and lower values are better for $L_1$.

\paragraph{Citation-level evaluation.}
For the 50-paper dataset, we evaluate citation-level agreement from both distributional and ranking perspectives. We first compare the complete distribution of citation contribution scores produced by our framework, $\sigma_d(c)$, with the corresponding citation contribution scores generated by Claude Sonnet 4.6. After normalizing the scores over the cited papers in each document, we measure distributional similarity using KL divergence and Jensen-Shannon divergence (JSD)~\cite{kld, jsd}. KL divergence provides a sensitive measure of mismatch between the predicted and reference distributions, particularly when the reference assigns substantial importance to a citation that receives little weight under the predicted distribution. We additionally report JSD because it is symmetric and bounded, which provides a more stable measure of the overall difference between the two citation-score distributions. Together, these metrics evaluate whether the model recovers not only the most important citations but also the broader allocation of contribution mass across all cited papers.

We complement this distributional analysis with ranking and retrieval-based evaluation. Spearman's $\rho$ and Kendall's $\tau$ measure agreement in the relative ordering of citations, while $L_1$ distance captures differences in the assigned score mass. To evaluate agreement among the most influential citations, we compare the top-$k$ predicted citations, for $1 \leq k \leq 8$, with the four citations identified by the Claude Sonnet 4.6 annotator as most foundational to the paper. We report Recall@$k$, Hit@$k$, which indicates whether at least one reference citation is retrieved, Overlap@$k$, and nDCG@$k$. These metrics provide complementary views of citation-level agreement. Rank-based metrics assess relative ordering, top-$k$ metrics assess recovery of the most important citations, and KL divergence and JSD assess similarity across the full citation contribution distribution.

\paragraph{Baselines.}
Prior work on citation importance has largely relied on section-based
heuristics, classifying citations according to the section in which they
appear without producing continuous importance scores or quantitative
rankings~\cite{cca}. Since no directly comparable numerical baselines
exist in the literature, we design two reference baselines that require
no language model calls, thereby isolating the contribution of LLM-based
section scoring from surface-level structural signals.
The \emph{citation\_frequency} baseline assigns equal weight to every
section at each level of the hierarchy ($\frac{1}{n}$ for each of $n$ top-level
sections), distributing citation scores within each leaf section
proportionally to raw mention counts.
The \emph{length\_weighted\_frequency} baseline replaces the uniform section weights with token counts, with paragraph weights proportional to $\text{tokens} + \alpha \cdot \text{mentions}$ and citation weight $\frac{\text{mentions}}{\text{tokens}}$.\footnote{$\alpha$ is a scaling factor.}

%% file: appendix/extended-exp.tex
\section{Extended Experiment Results}\label{app:exp:extended}

\begin{figure*}[t]
\centering
\includegraphics[width=1\textwidth]{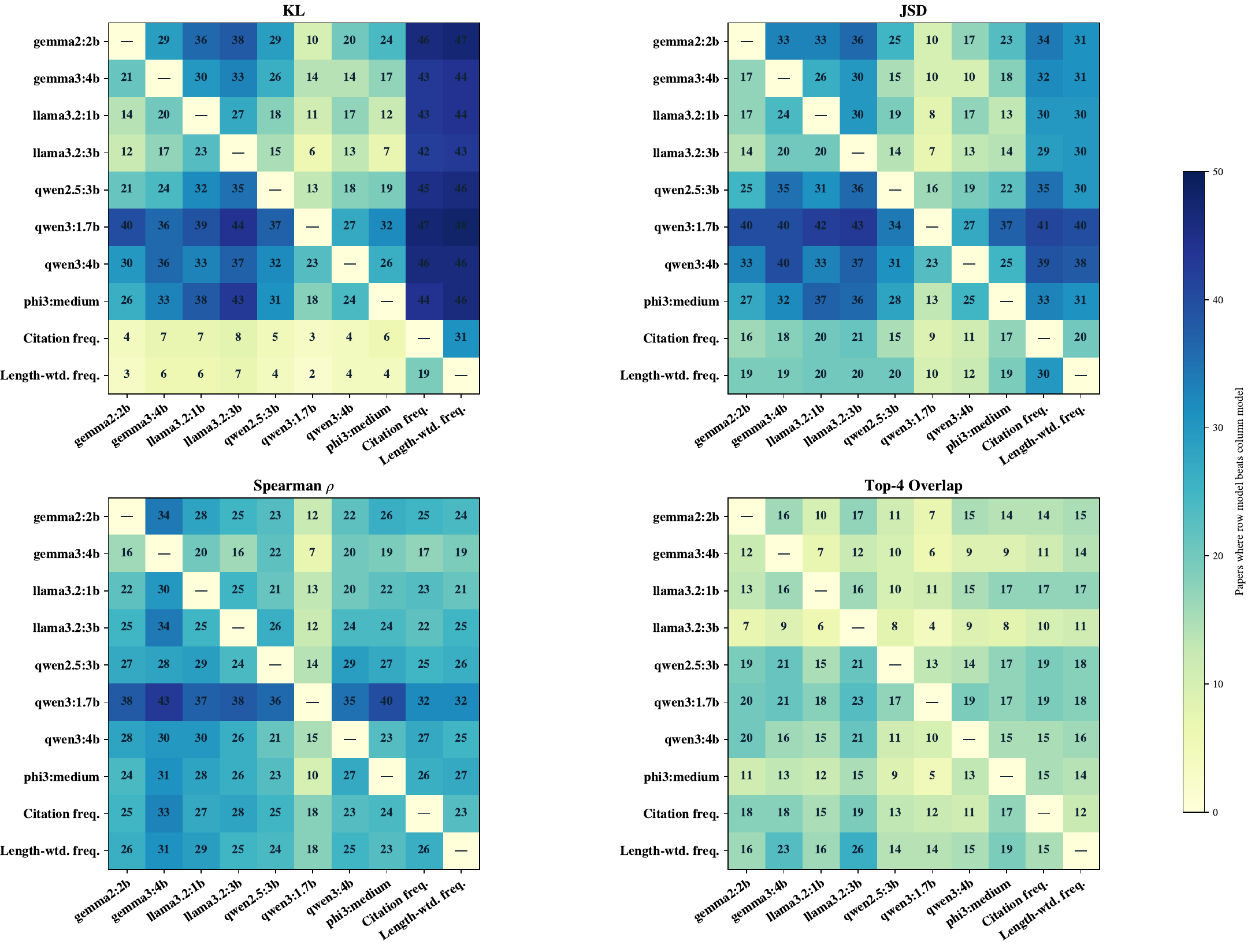}
\vspace{-3mm}
\caption{Pairwise citation-level comparisons across 50 papers. \textbf{Each cell reports the number of papers for which the row method outperforms the column method, with ties not counted.} A win corresponds to lower KL divergence and JSD or higher Spearman's $\rho$ and Top-4 overlap. Darker cells indicate a larger number of pairwise wins. \texttt{qwen3:1.7b} achieves the highest total number of wins across all four metrics, with its strongest advantage appearing on the distributional metrics KL divergence and JSD.}
\label{fig:pairwise_winrates}
\vspace{-4mm}
\end{figure*}

\begin{figure*}[t]
\centering
\includegraphics[width=1\textwidth]{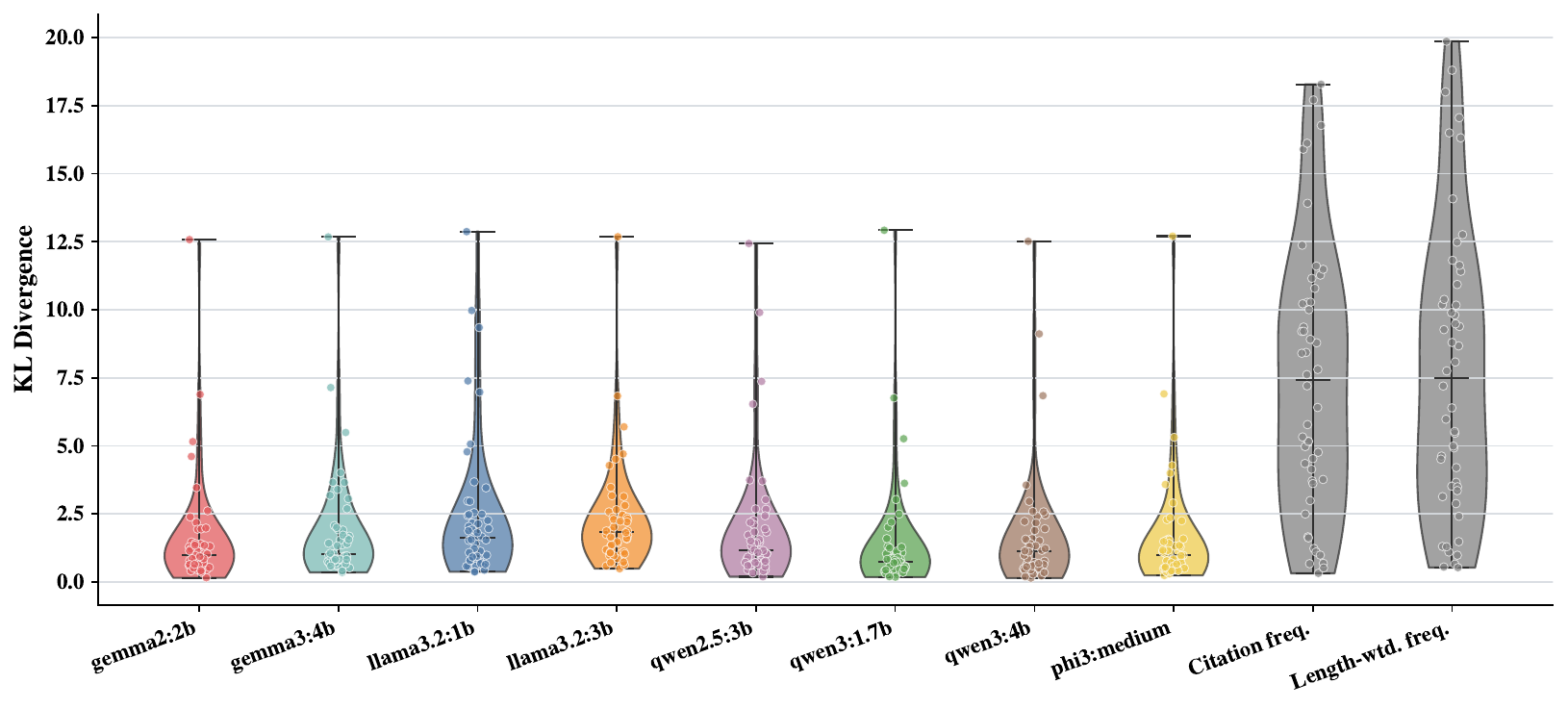}
\vspace{-3mm}
\caption{Distribution of per-paper KL divergence between citation contribution score distributions produced by each model or baseline and the Claude Sonnet 4.6 reference across 50 papers. Lower values indicate closer distributional agreement, while the vertical spread and upper tails reflect variation in agreement across papers. \texttt{qwen3:1.7b} achieves the lowest mean KL divergence among the evaluated models, while the \emph{citation-frequency} and \emph{length-weighted frequency} baselines exhibit substantially higher divergence and heavier upper tails.}
\label{fig:citation-kl-violin}
\vspace{-4mm}
\end{figure*}

\subsection{Section ranking}\label{app:sec-ranking}

Figures~\ref{fig:section_validation} in Section~\ref{subsec:validation} report validation results across the 50 annotated papers.

Among LLMs, \texttt{qwen3:1.7b} is the strongest section-ranking model against human annotations, achieving the highest Spearman correlation ($\rho = 0.669$), the highest Kendall tau-b ($\tau_b = 0.551$), and the lowest $L_1$ error (0.505). It is followed by \texttt{qwen2.5:3b} ($\rho = 0.616$, $\tau_b = 0.509$, $L_1 = 0.530$) and \texttt{llama3.2:3b} ($\rho = 0.594$, $\tau_b = 0.495$, $L_1 = 0.604$). This makes \texttt{qwen3:1.7b} the most human-aligned LLM in both rank-based and absolute-distribution agreement, a notable result given its relatively small size. At the other end, \texttt{llama3.2:1b} performs worst across the human-alignment metrics ($\rho = 0.270$, $\tau_b = 0.232$, $L_1 = 0.650$), while \texttt{gemma3:4b} also trails the stronger models on rank correlation ($\rho = 0.445$, $\tau_b = 0.359$), indicating that larger parameter count alone does not guarantee better section-importance calibration at this quantization level.

The same overall ordering persists when model outputs are compared against the two API references. Against Sonnet-4.6, \texttt{qwen3:1.7b} remains the best LLM on all three section metrics ($\rho = 0.612$, $\tau_b = 0.498$, $L_1 = 0.527$), with \texttt{llama3.2:3b} and \texttt{qwen2.5:3b} forming the next tier. Against GPT-OSS-120B, \texttt{qwen3:1.7b} again leads all LLMs ($\rho = 0.555$, $\tau_b = 0.467$, $L_1 = 0.438$), followed by \texttt{qwen2.5:3b} and \texttt{llama3.2:3b}. This consistency across human, Sonnet-4.6, and GPT-OSS-120B suggests that the relative ranking differences are robust rather than artifacts of a single reference source.

The \emph{length-weighted frequency} baseline remains a strong non-LLM heuristic. It is the best overall system by Spearman and Kendall when evaluated against Sonnet-4.6 ($\rho = 0.760$, $\tau_b = 0.644$) and GPT-OSS-120B ($\rho = 0.659$, $\tau_b = 0.557$), and it also attains the lowest $L_1$ against Sonnet-4.6 (0.439). Against human annotations, however, \texttt{qwen3:1.7b} remains stronger on rank correlation, and also yields the lowest LLM $L_1$. This pattern suggests that section length is indeed a powerful structural prior, but that it only partially explains human judgments: it captures some broad tendencies shared by API references, yet it does not fully replace content-sensitive modeling when the goal is alignment with human section-importance annotations.

\begin{figure*}[t]
\centering
\includegraphics[width=0.8\textwidth]{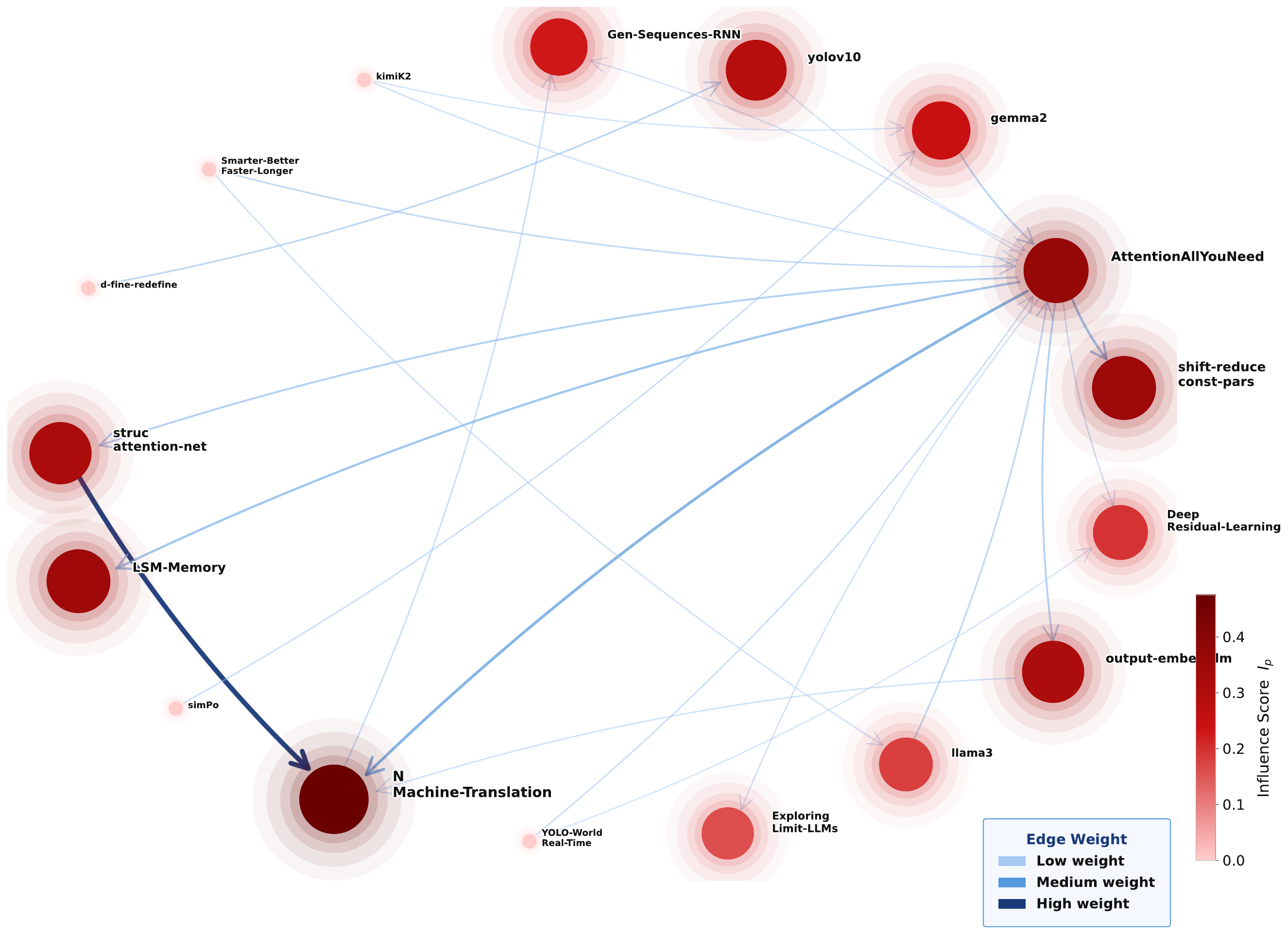}
\caption{Citation network of the case-study corpus. Each node represents a paper in the corpus, with node size proportional to the normalized cross-paper influence score $I_{\mathcal{P}}(p)$, which captures how influential a paper is within the network through direct and indirect citations. A directed edge from paper $a$ to paper $b$ indicates that $a$ cites $b$.}
\label{fig:citation_network_Ip}
\end{figure*}

\subsection{Ablation Study: Effect of Model Size}\label{app:exp:modelsize}
To further examine whether contribution-estimation quality improves with model scale, we compare the models using paper-level pairwise wins. For each pair of methods, Fig.~\ref{fig:pairwise_winrates} reports the number of papers for which one method outperforms the other over our 50-paper dataset. A win
corresponds to lower KL divergence or JSD and higher Spearman's $\rho$ or Top-4 overlap. Ties are excluded.

\texttt{qwen3:1.7b} shows the strongest overall performance. Compared with the other evaluated LLMs, it accumulates 255 wins for KL divergence, 263 for JSD, 267 for Spearman's $\rho$, and 135 for Top-4 overlap, the highest total for every metric. Its advantage extends beyond mean performance and is observed consistently across individual papers. This result is particularly notable because \texttt{qwen3:1.7b} is one of the smallest models in the comparison. Therefore, it provides a favorable combination of model size and contribution scoring quality.

The pairwise results also demonstrate that increasing parameter count does not consistently improve performance. For Qwen3, \texttt{qwen3:1.7b} outperforms \texttt{qwen3:4b} on 27 versus 23 papers for both KL divergence and JSD, 35 versus 15 for Spearman's $\rho$, and 19 versus 10 for Top-4 overlap.
A similar pattern appears for Llama 3.2. The 1B model outperforms the 3B model by 27 to 23 on KL divergence, 30 to 20 on JSD, and 16 to 6 on Top-4 overlap, while the two models are tied at 25 wins each for Spearman's $\rho$.

The Llama result is also interesting when considered together with the
section-level evaluation (Figure~\ref{fig:section_validation}). There, the 3B model performs substantially better than the 1B model, whereas the citation scoring results favor the smaller
model on three of the four pairwise metrics. This suggests that the effect of model scale may depend on the granularity of the contribution estimation task. A model that is better at allocating contribution across broad sections is not necessarily better at distinguishing contribution among individual citations.

The Gemma comparison follows the same general pattern. \texttt{gemma2:2b} outperforms \texttt{gemma3:4b} by 29 to 21 on KL divergence, 33 to 17 on JSD, 34 to 16 on Spearman's $\rho$, and 16 to 12 on Top-4 overlap. However, because the two Gemma models also differ in model generation and training, this comparison does not isolate parameter count alone. It should therefore be interpreted as additional evidence that larger models do not automatically provide better contribution estimates rather than as a controlled scaling experiment.

Overall, parameter count is not a reliable predictor of contribution scoring quality in our experiments. Performance depends on both the model and the level at which contribution is estimated, and the strongest model is not the largest one. The consistent advantage of \texttt{qwen3:1.7b} across
distributional, ranking, and top-citation agreement supports its selection as the backbone for the corpus-level analysis. Because ties are excluded from the pairwise counts, absolute totals should not be compared directly across metrics, particularly for Top-4 overlap, which produces more ties than the
continuous metrics.




\subsection{Additional Corpus-level Case Study Details}
\label{app:corpus-case}

Figure~\ref{fig:citation_network_Ip} visualizes the weighted citation graph used in the corpus-level case study. Each directed edge follows the citation direction from a citing paper to a cited paper, and edge weights are derived from the contribution assigned to the cited work by the citing paper. Node size is proportional to the normalized influence score $I_{\mathcal{P}}(p)$, which measures the share of cross-paper influence mass attributed to each paper after removing self-contribution.

The figure highlights the difference between raw citation degree and propagated influence. Papers with the same in-degree number within the corpus may receive different $I_{\mathcal{P}}$ values because influence depends not only on how many papers cite them, but also on the contribution weights of those citations and the influence of the citing papers. For example, papers with a single incoming citation can differ substantially in propagated influence when they are cited by papers occupying different positions in the weighted citation DAG.

The case study also illustrates the distinction between corpus-level contribution $\sigma_{\mathcal{P}}(p)$ and normalized influence $I_{\mathcal{P}}(p)$. The former combines a paper's intrinsic original contribution with propagated credit received through citation paths, whereas the latter isolates cross-paper influence by removing the paper's self-contribution. As a result, recent or terminal papers can obtain high $\sigma_{\mathcal{P}}(p)$ while having low or zero $I_{\mathcal{P}}(p)$, whereas foundational papers can accumulate high influence through direct and indirect citation paths even when they do not have the highest corpus-level contribution score.

Finally, the case study should be interpreted as an induced-corpus analysis rather than a global impact ranking. The scores depend on the selected subgraph: omitting relevant descendants or neighboring papers can reduce a paper's propagated influence, and a small corpus may compress or distort citation ratios observed in the global scholarly graph. Thus, $\sigma_{\mathcal{P}}(p)$ and $I_{\mathcal{P}}(p)$ are best understood as local structural measurements conditioned on the analyzed corpus.

%% file: appendix/human-noise.tex
\subsection{Human Annotation Consistency and Variability}\label{app:human-noise}

Contribution judgments are inherently subjective, and repeated assessments of the same paper may not produce identical scores or citation selections. We therefore conduct two complementary analyses to characterize variability in the human annotations. First, we collect a second round of high-level section
scores and directly measure agreement with the original annotations. Second, we examine whether annotators reconsider their original top-4 foundational citations after reviewing model-generated citation suggestions. The original annotations are retained unchanged and are used throughout all reported
evaluations.

\paragraph{Section-level consistency.}
For each of the 50 papers, annotators completed a second round of high-level
section scoring. We compare the two annotation rounds using Spearman's
$\rho$, Kendall's $\tau_b$, $L_1$ distance, and Jensen-Shannon divergence
(JSD). Table~\ref{tab:human_round_consistency} summarizes the resulting
test-retest agreement.

\begin{table}[t]
\centering
\scriptsize
\setlength{\tabcolsep}{5pt}
\begin{tabular}{lcc}
\toprule
Metric & Mean & 95\% CI \\
\midrule
Spearman $\rho$ $\uparrow$
    & 0.891 & $[0.856,\,0.923]$ \\
Kendall $\tau_b$ $\uparrow$
    & 0.822 & $[0.778,\,0.864]$ \\
$L_1$ $\downarrow$
    & 0.253 & $[0.220,\,0.288]$ \\
JSD $\downarrow$
    & 0.015 & $[0.012,\,0.018]$ \\
\bottomrule
\end{tabular}
\caption{Test-retest agreement between two rounds of human high-level
section annotations over 50 papers. Confidence intervals are 95\% bootstrap
intervals over papers. Higher values indicate stronger agreement for
Spearman's $\rho$ and Kendall's $\tau_b$, while lower values indicate
stronger agreement for $L_1$ and JSD.}
\label{tab:human_round_consistency}
\end{table}

\begin{table*}[t]
\centering
\small
\setlength{\tabcolsep}{10pt}
\begin{tabular}{lcc}
\toprule
Metric & Human R1 vs R2 & Sonnet-4.6 R1 vs R2 \\
\midrule
Mean Spearman $\rho$ $\uparrow$ & 0.891 {\scriptsize[$0.856,\,0.922$]} & 0.952 {\scriptsize[$0.927,\,0.970$]} \\
Mean Kendall $\tau_b$ $\uparrow$ & 0.822 {\scriptsize[$0.777,\,0.863$]} & 0.914 {\scriptsize[$0.886,\,0.937$]} \\
Mean $L_1$ $\downarrow$ & 0.253 {\scriptsize[$0.220,\,0.287$]} & 0.129 {\scriptsize[$0.098,\,0.168$]} \\
Mean JSD $\downarrow$ & 0.015 {\scriptsize[$0.012,\,0.018$]} & 0.009 {\scriptsize[$0.003,\,0.019$]} \\
Top-1 agreement $\uparrow$ & 0.740 {\scriptsize[$0.620,\,0.860$]} & 0.920 {\scriptsize[$0.840,\,0.980$]} \\
Top-2 overlap $\uparrow$ & 0.830 {\scriptsize[$0.760,\,0.890$]} & 0.910 {\scriptsize[$0.850,\,0.960$]} \\
\bottomrule
\end{tabular}
\caption{Round to round agreement at the section level, comparing two rounds of human annotation against two rounds of Claude Sonnet-4.6 annotation over the same 50 papers. Entries report means with 95\% bootstrap confidence intervals. Higher is better for Spearman $\rho$, Kendall $\tau_b$, top-1 agreement, and top-2 overlap; lower is better for $L_1$ and JSD.}
\label{tab:human-vs-sonnet-annotation-rounds}
\end{table*}
Human section rankings are highly stable across annotation rounds. The two rounds achieve a mean Spearman correlation of $\rho=0.891$ and Kendall correlation of $\tau_b=0.822$, with median values of $0.915$ and $0.833$, respectively. In 42 of the 50 papers, the round to round Spearman correlation
is at least $0.8$, and in 29 papers it is at least $0.9$. The highest-scored section also remains unchanged for 74\% of the papers, while the mean overlap between the two highest-scored sections is $0.83$.

Agreement remains strong but exhibits greater variation when considering the precise allocation of contribution mass. The two annotation rounds have a mean $L_1$ distance of $0.253$ and a mean JSD of $0.015$. These results suggest that the primary source of annotator variability is generally not a substantial change in which sections are considered important, but rather variation in how much contribution weight is assigned to them. Section-level contribution judgments therefore appear more reproducible as relative rankings than as precisely calibrated score allocations.

This result also provides useful context for the section-level validation in Figure~\ref{fig:section_validation}. Human test-retest agreement is higher than the agreement between the original human annotations and either API-based reference. The two human rounds achieve $\rho=0.891$, $\tau_b=0.822$, and $L_1=0.253$, compared with $\rho=0.865$, $\tau_b=0.779$, and $L_1=0.320$ between the original human annotations and Claude Sonnet 4.6, and $\rho=0.799$, $\tau_b=0.701$, and $L_1=0.348$ for GPT-OSS-120B (Table~\ref{tab:api_human_alignment}). Thus, the human annotations are
substantially reproducible but not deterministic, and the observed round to round variation provides an empirical indication of the uncertainty associated with assigning exact contribution weights.

Interestingly, the API annotator is even more reproducible across repeated
annotations than the human annotators. As shown in Table~\ref{tab:human-vs-sonnet-annotation-rounds}, Claude Sonnet 4.6 achieves higher round to round rank agreement, with Spearman $\rho=0.952$ and Kendall
$\tau_b=0.914$, compared with $0.891$ and $0.822$ for the human annotations.
It also exhibits substantially lower variation in the assigned score mass, with an $L_1$ distance of $0.129$ compared with $0.253$ for humans.
Top-1 agreement similarly increases from $0.740$ to $0.920$, while Top-2 overlap increases from $0.830$ to $0.910$. These results indicate that, under a fixed annotation protocol, the LLM produces more reproducible section-level judgments and is less sensitive to repeated evaluation than human annotators.

This greater consistency should not be interpreted as evidence that the LLM annotations are necessarily more accurate. Reproducibility measures the stability of repeated judgments, whereas validity concerns whether those judgments reflect the underlying scientific contribution. A model may produce highly consistent annotations while also reproducing the same systematic bias across runs. The result therefore supports the use of LLMs as stable contribution estimators, while human annotations remain important as an independent reference for evaluating alignment with expert judgment.

\paragraph{Citation-level variability.} We additionally examine the stability of the human top-4 citation selections through a post-hoc review. After the original annotations had been collected, annotators were shown four highly ranked model-generated citations and asked whether, given an opportunity to reconsider their selections, they would revise their original top-4 set.\footnote{The original annotations were not modified and remain the reference annotations used in all reported experiments. This preserves the independence of the evaluation labels from model-generated outputs.}

In approximately 78\% of cases, annotators indicated that they would include at least one model-suggested reference in a revised selection. In approximately 21\% of cases, annotators explicitly reported having overlooked one or more important references during their initial assessment, and in approximately 32\% of cases the model's highest-ranked citation was subsequently judged appropriate for inclusion. These observations suggest that selecting a small, fixed set of foundational citations is less stable than assigning a broad ordering of section importance. A single-pass top-4 annotation may omit references that the same expert considers important upon closer reconsideration.

This post-hoc analysis should not be interpreted as an independent estimate of annotation error. Because annotators were exposed to model-generated suggestions before reconsidering their selections, some revisions may reflect anchoring bias or suggestion effects rather than genuine corrections of omissions. For this reason, all quantitative evaluations use only the original annotations, which were collected before annotators were exposed to the model-generated recommendations.

Independent API-based annotations provide complementary evidence that citation-level selection admits greater disagreement than section-level ranking. As shown in Table~\ref{tab:api_human_alignment}, Claude Sonnet 4.6 and GPT-OSS-120B exhibit strong agreement with the human high-level section annotations, with Spearman correlations of $0.865$ and $0.799$, respectively. Agreement with the human top-4 citation selections is considerably lower. Claude Sonnet 4.6 recovers on average $1.72$ of the four human selected citations, corresponding to Recall@4 of $0.430$, while GPT-OSS-120B recovers $1.32$, corresponding to Recall@4 of $0.330$. Nevertheless, Hit@4 remains high at $0.900$ and $0.840$, respectively, indicating that the API-generated top-4 sets usually contain at least one citation also selected by the human annotator.

The large gap between Hit@4 and Recall@4 suggests that disagreement often concerns which subset of several plausible foundational references should occupy a small top-4 set, rather than a complete failure to identify relevant prior work. This interpretation is also consistent with the post-hoc human revisions, where annotators frequently considered model suggested references
reasonable additions to their original selections. Consequently, Recall@4 and nDCG@4 should be interpreted as measures of agreement with one independently elicited expert selection rather than exact measures of scientific correctness. They may penalize model selected citations that are scientifically defensible but absent from the annotator's original top-4 set.

\paragraph{Implications for evaluation.}
The two analyses reveal different forms of annotation variability. High-level section judgments are highly reproducible in their relative ordering, although the precise allocation of contribution mass varies across repeated assessments. Citation-level judgments are less constrained because annotators must select a small subset from a potentially larger set of plausibly foundational references. We therefore treat human annotations as an independent evaluation reference rather than an exact ground truth. The use of multiple reference sources, complementary ranking and distributional metrics, and the preservation of the original annotations reduces the dependence of our conclusions on any single annotation set.

%% file: appendix/prompts.tex
\section{LLM Prompts}
\label{app:prompts}

The following prompts are used at each stage of the scoring pipeline.
Each stage shows the \textbf{system prompt} (blue) followed by the
\textbf{user prompt template} (teal).
Placeholders in curly braces (e.g.\ \texttt{\{parent\_name\}}) are
filled at runtime with values extracted from the paper.

\subsection{Stage 1: Section Scoring}

\begin{tcolorbox}[systemprompt, title=System Prompt]
You are an expert academic reviewer.
Given the child sections or subsections of one paper segment, distribute
100 points across them based on how much each contributes to the parent
segment's main scientific contribution. Treat section names as meaningful context. Use the excerpts and parent
context to judge which children contribute most strongly through methods,
theory, algorithms, proofs, experimental findings, analysis, problem
formulation, or other core scientific content.
Background, transitions, setup details, summaries, and lower-impact
narrative should usually receive less weight.
Return percentages, not copied numbers from the excerpts.
Percentages must be non-negative and sum to 100.
Return plain text lines only. Do not output JSON.
\end{tcolorbox}

\begin{table*}[!tbh]
\centering
\small
\begin{tabular}{clll}
\hline
\textbf{Stage} & \textbf{Name} & \textbf{Input} & \textbf{Output} \\
\hline
1 & Section scoring     & Section tree + excerpts  & \% weight per section \\
2 & Subsection scoring  & Parent + children        & \% weight per child   \\
3 & Paragraph split     & Paragraph text           & technical / citation \% \\
4 & Score split         & \% from Stage 3 (code)   & $\sigma_\text{orig}$, $\sigma_\text{cit}$ per paragraph \\
5 & Citation allocation & Paragraph + citation list & \% weight per citation \\
\hline
\end{tabular}
\caption{LLM prompt stages in the scoring pipeline.}
\label{tab:prompt-stages}
\end{table*}

\begin{tcolorbox}[userprompt, title=User Prompt Template]
Parent segment: \{parent\_name\}\\
Parent score to distribute: \{parent\_score\}\\
Parent context: these items are section/subsection children of
"\{parent\_name\}". \\
Treat their names as meaningful context, not just labels.\\[4pt]
Parent content (context): \{parent\_content\}\\[4pt]
Task: Assign percentage importance to the following child segments
according to how much each contributes to the parent segment's main
scientific contribution.\\[4pt]
Guidelines:\\
- Higher scores: segments that carry the core scientific contribution
  through methods, theory, algorithms, proofs, experimental findings,
  analysis, or important problem formulation.\\
- Lower scores: segments that mainly provide background, motivation,
  transitions, setup details, or lower-impact narrative.\\[4pt]
Child segments (id $\to$ name and excerpt): \{items\}\\[4pt]
Output one percentage per child. Lines must sum to 100.
Do not output JSON. Do not include explanations.
\end{tcolorbox}

\subsection{Stage 2: Subsection and Child Scoring}
Uses the same system and user prompt templates as Stage 1,
applied recursively to each parent-children group in the section tree.

\subsection{Stage 3: Paragraph Scoring (Technical vs.\ Citation Split)}

In the implementation and prompting framework, the term ``technical\_percentage'' is used interchangeably with ``original contribution'' as defined in the paper terminology.
\begin{tcolorbox}[systemprompt, title=System Prompt]
You are an expert academic reviewer. For each paragraph, split its total
score into two components:

\textbf{technical\_percentage}: the percentage of value this paragraph
contributes through the authors' own original ideas — new definitions,
algorithms, proofs, experimental design, results, or analysis that would
survive even if all cited works were removed.

\textbf{citation\_percentage}: the percentage of value this paragraph
derives from cited prior work — including summarizing, comparing,
contextualizing, or building upon existing work.

technical + citation must equal 100 for each paragraph.
\end{tcolorbox}

\begin{tcolorbox}[userprompt, title=User Prompt Template]
Task: For each paragraph, split the value into technical\_percentage
and citation\_percentage.\\[4pt]
Section context: these paragraphs are from the "\{section\_name\}"
section. Use this to calibrate — a Related Work paragraph behaves
very differently from a Methods paragraph.\\[4pt]
Calibration guidance:\\
- Related Work (summarizes prior methods): citation $\approx$ 80-95\%\\
- Related Work (novel connection to the paper's gap): citation $\approx$ 50-70\%\\
- Methods (authors' new algorithm): citation $\approx$ 0--15\%\\
- Conclusion (paper's own contributions): citation $\approx$ 0--10\%\\[4pt]
Rules: technical + citation = 100; if has\_citations is false,
citation\_percentage = 0.\\[4pt]
Paragraph entries: \{paragraphs\_json\}\\[4pt]
Output: \texttt{paragraph\_id: technical=<float>, citation=<float>}
\end{tcolorbox}

\subsection{Stage 4: Splitting Score: Citation vs.\ Originality}
\textit{The citation\_percentage and technical\_percentage output from
Stage 3 are used directly to divide each paragraph's importance score
into a citation channel score and a technical (originality) channel
score in code, without an additional LLM call.}

\subsection{Stage 5: Distributing Citation Score Among Citations}

\begin{tcolorbox}[systemprompt, title=System Prompt]
You are an expert academic reviewer.
Your task is to distribute percentage importance among the citations
that appear in a paragraph, based on how much each cited work
contributes to the paragraph's claims or grounding.\\
Percentages must be non-negative and sum to 100.\\
Return plain text lines only.
\end{tcolorbox}

\begin{tcolorbox}[userprompt, title=User Prompt Template]
Paragraph id: \{paragraph\_id\}\\
Paragraph citation score to distribute: \{paragraph\_citation\_score\}\\[4pt]
Paragraph text: \{paragraph\_text\}\\[4pt]
Task: Divide percentage importance among the citations below.\\
- Higher share: citation directly supports the paragraph's claim;
  provides key comparison or baseline; introduces a method the
  paragraph builds upon.\\
- Lower share: citation is peripheral or only mentioned briefly.\\[4pt]
Citation entries (citation\_id $\to$ citation and context):
\{citations\_json\}\\[4pt]
Output one line per citation.
Lines must sum to 100.
Do not output JSON. Do not include explanations.
\end{tcolorbox}

%% file: appendix/annotators.tex
\clearpage
\onecolumn
\subsection{Human Annotation Guidelines}
\label{app:annot}

Annotators described in Section~\ref{subsec:validation} were either authors of the annotated papers or scholars with sufficient familiarity with the relevant research area to assess them, rather than crowd annotators. Each annotator completed two tasks: (1) distributing a unit contribution score across the paper's high-level sections, and (2) identifying the four cited papers most foundational to the work.
The following guidelines were provided to the human annotators.

\begin{tcolorbox}[
  guidelinebox,
  title={Guidelines}
]

\textbf{What to score:}

Score only sections and subsections.

Do NOT score paragraphs.

Do NOT separately score citations inside sections.

\medskip

\textbf{General principle}

Treat the paper as a hierarchy.

At each parent node, distribute a fixed budget across its immediate child sections.

You should judge how much ORIGINAL SCIENTIFIC VALUE each child section creates for the paper.

\medskip

Ask yourself: ``If this section were removed, how much of the paper's scientific contribution would be lost?''

\medskip

\textbf{Focus on:}
\begin{itemize}
    \item new methods
    \item new algorithms
    \item new theory or proofs
    \item new formulations
    \item new experimental design
    \item new findings or results
    \item new technical analysis
\end{itemize}

\textbf{Give lower scores to sections that mainly:}
\begin{itemize}
    \item motivate the problem
    \item provide background
    \item summarize related work
    \item restate results from other sections
    \item provide transitions or narrative framing
\end{itemize}

\medskip
\begin{center}
    \rule{0.9\textwidth}{0.4pt}
\end{center}

\textbf{How to score}

\medskip

\textbf{Step 1: Top-level sections}

Start with the whole paper budget = 100 points.

Distribute those 100 points across the top-level sections of the paper.

Example:

\begin{quote}
Introduction: 12

Methodology: 36

Experiments: 34

Related Work: 8

Conclusion: 5

Limitations: 5
\end{quote}

The top-level section scores must sum to exactly 100.

\medskip
\begin{center}
    \rule{0.9\textwidth}{0.4pt}
\end{center}

\textbf{What counts as high importance}

\medskip

\textbf{Higher scores:}
\begin{itemize}
    \item the core technical method of the paper
    \item the main model, algorithm, or framework
    \item the main theorem/proof/formulation
    \item the experiments that establish the paper's core claims
    \item the analysis that is essential to the contribution
\end{itemize}

\textbf{Lower scores:}
\begin{itemize}
    \item background or setup
    \item narrative overview
    \item broad motivation
    \item literature review / related work summary
    \item discussion that mainly repeats earlier claims
    \item conclusion text that summarizes rather than adds new scientific value
\end{itemize}

\end{tcolorbox}

\begin{tcolorbox}[
  foundationalbox,
  title={PART 2: TOP 4 FOUNDATIONAL PAPERS}
]

\textbf{Goal}

Choose the 4 cited papers that are most foundational to your work.

\medskip

\textbf{Definition of ``foundational'':}

A foundational paper is one that directly supports the core technical contribution of your paper.

\textbf{Examples of foundational use:}
\begin{itemize}
    \item a method your work builds on
    \item a baseline that is central to your evaluation
    \item a theory or formulation your work depends on
    \item a prior framework that strongly shaped your design
    \item a key paper without which your contribution would be substantially different
\end{itemize}

\textbf{Examples of non-foundational use:}
\begin{itemize}
    \item broad background references
    \item generic motivation references
    \item citations included mainly for completeness
    \item peripheral comparisons
    \item papers mentioned only in passing
\end{itemize}

\end{tcolorbox}